\newtheorem{theorem}{Theorem}[section]
\newtheorem{lemma}[theorem]{Lemma}
\newtheorem{claim}{Claim}[theorem]
\newtheorem{conjecture}[theorem]{Conjecture}
\newenvironment{claimproof}{\noindent{\emph{Proof of Claim.}}}{\hfill$\triangleleft$\medskip}
\crefname{claim}{Claim}{Claims}
\crefname{lemma}{Lemma}{Lemmas}
\newcommand{\bull}{\textrm{bull}}
\newcommand{\E}{\textsf{E}}
\newcommand{\chair}{\textrm{chair}}
\newcommand{\rev}{\mathsf{rev}}
\newcommand{\Oh}{\mathcal{O}}
\newcommand{\cF}{\mathcal{F}}
\newlength{\RoundedBoxWidth}
\newsavebox{\GrayRoundedBox}
\newenvironment{GrayBox}[1]%
   {\setlength{\RoundedBoxWidth}{.93\columnwidth}
    \def\boxheading{#1}
    \begin{lrbox}{\GrayRoundedBox}
       \begin{minipage}{\RoundedBoxWidth}}%
   {   \end{minipage}
    \end{lrbox}
    \begin{center}
    \begin{tikzpicture}%
       \node(Text)[draw=black!20,fill=white,rounded corners,inner sep=2ex,text width=\RoundedBoxWidth]
             {\usebox{\GrayRoundedBox}};
        \coordinate(x) at (current bounding box.north west);
        \node [draw=white,rectangle,inner sep=3pt,anchor=north west,fill=white]
        at ($(x)+(6pt,.75em)$) {\boxheading};
    \end{tikzpicture}
    \end{center}}
\newenvironment{defproblemx}[1]{\noindent\ignorespaces%
                                \FrameSep=6pt%
                                \parindent=0pt%
                \begin{GrayBox}{#1}%
                \begin{tabular*}{\columnwidth}{!{\extracolsep{\fill}}@{\hspace{.1em}} >{\itshape} p{1.5cm} p{0.86\columnwidth} @{}}%
            }{
                \end{tabular*}%
                \end{GrayBox}%
                \ignorespacesafterend
            }
\newcommand{\problemTask}[3]{%
  \begin{defproblemx}{#1}
    Input: & #2 \\
    Task: & #3
  \end{defproblemx}
}
\begin{document}

\title {Finding large $k$-colorable induced subgraphs\\ in (bull, chair)-free and (bull,E)-free graphs}
\date{}

\author[1]{Nadzieja Hodur}
\author[1]{Monika Pil\'sniak}
\author[1]{\\Magdalena Prorok}
\author[2,3]{Pawe{\l} Rz\k{a}\.zewski\footnote{This research was funded in whole or in part by National Science Centre grant number 2024/54/E/ST6/00094. For the purpose of Open Access, the author has applied a CC-BY public copyright licence to any
Author Accepted Manuscript (AAM) version arising from this submission.}}

\affil[1]{AGH University of Krakow, al. Mickiewicza 30, 30-059 Krak\'ow, Poland}
\affil[2]{Warsaw University of Technology, Poland}
\affil[3]{University of Warsaw, Poland}

\date{\today}

\maketitle

\begin{abstract}
    We study the \textsc{Max Partial $k$-Coloring} problem, where we are given a vertex-weighted graph, and we ask for a maximum-weight induced subgraph that admits a proper $k$-coloring.
    For $k=1$ this problem coincides with \textsc{Maximum Weight Independent Set}, and for $k=2$ the problem is equivalent (by complementation) to \textsc{Minimum Odd Cycle Transversal}.
    Furthermore, it generalizes \textsc{$k$-Coloring}.

    We show that \textsc{Max Partial $k$-Coloring} on $n$-vertex instances with clique number $\omega$ can be solved in time
    \begin{itemize}
        \item $n^{\mathcal{O}(k\omega)}$ if the input graph excludes the bull and the chair as an induced subgraph,
        \item $n^{\mathcal{O}(k\omega \log n)}$ if the input graph excludes the bull and \textsf{E} as an induced subgraph.
    \end{itemize}    
    This implies that \textsc{$k$-Coloring} can be solved in polynomial time in the former class, and in quasipolynomial-time in the latter one.
\end{abstract}

\noindent
{\textbf{Keywords:} \textsc{Max Partial $k$-Coloring}, \textsc{List $k$-Coloring}, hereditary graph classes, bull-free graphs} 

\section{Introduction}
Graph coloring is undoubtedly one of the most studied notions in graph theory, both from the structural and from the algorithmic point of view. For a positive integer $k$, in the $k$-\textsc{Coloring} problem we are given a graph~$G$, and we ask whether $G$ admit a proper $k$-coloring of $G$, i.e., an assignment of labels from $\{1,\ldots,k\}$ to the vertices of $G$ so that adjacent vertices receive distinct labels.
It is well-known that for each $k \geq 3$, the \hbox{$k$-\textsc{Coloring}} problem is \textsf{NP}-hard~\cite{DBLP:journals/sigact/Stockmeyer73}.
However, such a hardness result is typically just a start of further research questions:
What makes the problem hard?
Which instances are actually hard?
This motivates the study of the complexity of the problem for restricted graph classes, in hope to understand the boundary between tractable and intractable cases.

A natural way of obtaining such graph classes is by forbidding certain substructures.
For graphs $G,F$, we say that $G$ is \emph{$F$-free} is $F$ is not an induced subgraph of $G$.
In other words, we cannot obtain $F$ from~$G$ by deleting vertices.
For a family $\cF$ of graphs, we say that $G$ is \emph{$\cF$-free} if it $F$-free for every $F \in \cF$.
If $\cF = \{F_1,\ldots,F_k\}$, we usually write $(F_1,\ldots,F_k)$-free instead of $\{F_1,\ldots,F_k\}$-free.

Note that for each $\cF$, the family of $\cF$-free graphs is \emph{hereditary}, i.e., closed under vertex deletion.
Conversely, every hereditary family can be equivalently defined as $\cF$-free graphs, for some (possibly infinite)~$\cF$.

The study of $k$-\textsc{Coloring} in hereditary graph classes is an active and fruitful area of research.
It is known that for the case of $F$-free graphs, i.e., if we forbid just one induced subgraph,
the problem remains \textsf{NP}-hard for every $k \geq 3$ unless $F$ is a linear forest (i.e., a forest of paths)~\cite{DBLP:journals/siamcomp/Holyer81a,DBLP:journals/cpc/Emden-WeinertHK98,DBLP:journals/jal/LevenG83}.
Let us focus on the case that $F$ is connected, i.e., it is a path on $t$ vertices, denoted by $P_t$.
If $t \leq 5$, then $k$-\textsc{Coloring} is polynomial-time solvable for every $k$~\cite{DBLP:journals/algorithmica/HoangKLSS10}.
Furthermore, we know that if $t = 6$, then $k$-\textsc{Coloring} is polynomial-time solvable for $k \leq 4$~\cite{DBLP:journals/dam/RanderathS04,DBLP:journals/siamcomp/ChudnovskySZ24,DBLP:journals/siamcomp/ChudnovskySZ24a} and \textsf{NP}-hard otherwise~\cite{DBLP:journals/ejc/Huang16}.
It is also known that for $P_7$-free graphs, $k$-\textsc{Coloring} is polynomial-time solvable for $k = 3$~\cite{DBLP:journals/combinatorica/BonomoCMSSZ18} and \textsf{NP}-hard for $k \geq 4$~\cite{DBLP:journals/ejc/Huang16}.
Finally, for $P_8$-free graphs, $k$-\textsc{Coloring} is \textsf{NP}-hard for all $k \geq 4$~\cite{DBLP:journals/tcs/BroersmaGPS12}.

This leaves open the complexity of $3$-\textsc{Coloring} in $P_t$-free graphs for $k \geq 8$,
which is one of the notorious open problems in algorithmic graph theory.
Interestingly, it was shown that for every fixed $t$, the $3$-\textsc{Coloring} problem in $P_t$-free graphs can be solved in \emph{quasipolynomial time}~\cite{DBLP:conf/sosa/PilipczukPR21}. This is a strong indication that none of the remaining cases is \textsf{NP}-hard.

Let us remark that almost all mentioned algorithmic results work for the more general \textsc{List}-$k$-\textsc{Coloring} problem, where each vertex $v$ of the input graph $G$ is equipped with a \emph{list} $L(v) \subseteq \{1,\ldots,k\}$, and we ask for a proper coloring assigning to each vertex a color from its list.
The only exception is the case of $k=4$ and $t=6$. Indeed, \textsc{List}-$4$-\textsc{Coloring} is \textsf{NP}-hard in $P_6$-free graphs~\cite{DBLP:journals/iandc/GolovachPS14}.

There are also some results concerning (\textsc{List}-)$k$-\textsc{Coloring} of $\cF$-free graphs, when $\cF$ contains more graphs~\cite{DBLP:journals/tcs/Bonomo-Braberman21,DBLP:journals/dm/ChudnovskySZ20,DBLP:journals/jgt/ChudnovskyMSZ17,DBLP:journals/algorithmica/JelinekKMNP22,DBLP:journals/jcss/GaspersHP19,DBLP:journals/ipl/DabrowskiP18,DBLP:journals/endm/BrauseSHRVK15}; see also survey papers by Schiermeyer~\cite{DBLP:journals/gc/RanderathS04} and Golovach, Johnson, Paulusma, and Song~\cite{DBLP:journals/jgt/GolovachJPS17}.

Our main motivation was the recent paper of Hodur, Pilśniak, Prorok, and Schiermeyer~\cite{hodur20243colourabilitybullhfreegraphs}.
Among other results, they proved that $3$-\textsc{Coloring} can be solved in polynomial time in the class of $(\bull,\E)$-free, where $\bull$ and $\E$ are the graphs depicted in \cref{pic:forbidden}.\footnote{Actually, they showed a stronger result, as they characterized all minimal $(\bull,\E)$-free graphs that \emph{not} 3-colorable.}
The goal of this paper is to generalize their result to (\textsc{List}-)$k$-\textsc{Coloring}, for any $k \geq 3$.

\begin{figure}[t]
    \centering    
    \begin{subfigure}{0.3\textwidth}
    \centering
    \includegraphics[width=0.4\linewidth]{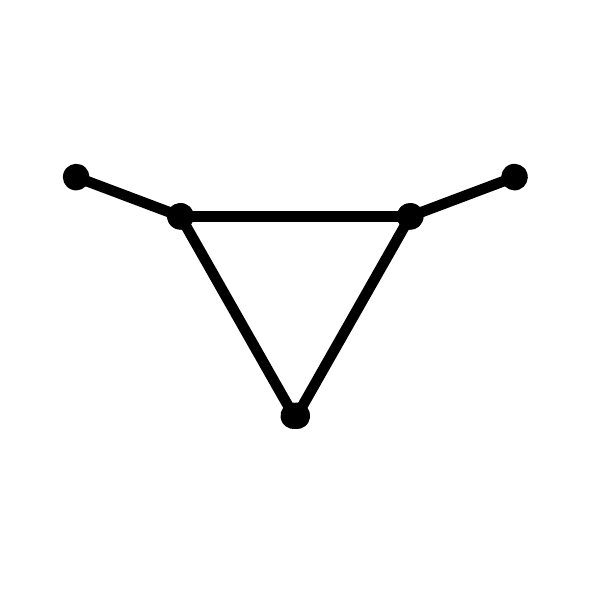} 
    \caption{\bull}
    \end{subfigure}
    \begin{subfigure}{0.3\textwidth}
    \centering
    \includegraphics[width=0.4\linewidth]{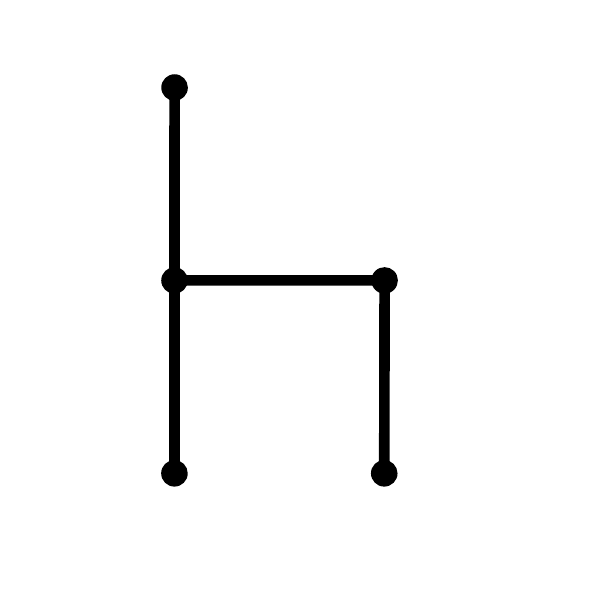}
    \caption{\chair}
    \end{subfigure}
    \begin{subfigure}{0.3\textwidth}
    \centering
    \includegraphics[width=0.4\linewidth]{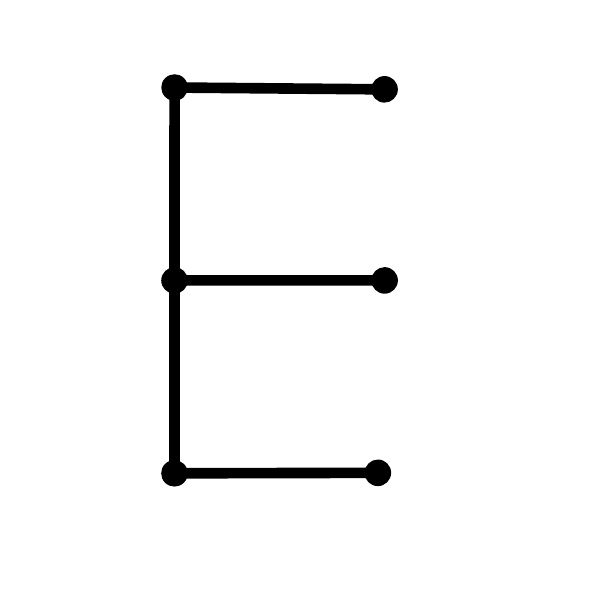}
    \caption{\E}    
    \end{subfigure}
    \caption{Considered forbidden subgraphs.}\label{pic:forbidden}
\end{figure}

In fact, we solve an even further generalization of $k$-\textsc{Coloring}, defined as follows.
\problemTask{\textsc{Max Partial $k$-Coloring}}
{a graph $G$, a revenue function $\rev : V(G) \times \{1,\ldots,k\} \to \mathbb{Q}_{\geq 0}$}
{a set $X$ and a proper $k$-coloring $c$ of $G[X]$, such that \[\sum_{v \in X} \rev(v,c(v))\] is maximum possible }

We can think of the value of $\rev(v,c)$ as the prize we get for coloring a vertex with color $c$.
We aim to find a partial coloring that maximizes the total prize.

Clearly this formalism captures $k$-\textsc{Coloring}: it is sufficient to set $\rev(v,c) = 1$ for every $v$ and $c$,
and check whether the maximum revenue is equal to the number of vertices.
If we additionally want to express lists $L : V(G) \to \{1,\ldots,k\}$, i.e., capture \textsc{List}-$k$-\textsc{Coloring}, we can set $\rev(v,c)=0$ if $c \notin L(v)$ and $\rev(v,c)=1$ otherwise, and again ask if there is a solution with total revenue $|V(G)|$.
However, \textsc{Max Partial $k$-Coloring} generalizes more problems.
If $k = 1$, then the problem is equivalent to \textsc{Max Weight Independent Set} problem.
If $k = 2$, then we ask for a maximum-weight induced bipartite subgraph, which is by complementation equivalent to \textsc{Odd Cycle Transversal}. Both these problems received a considerable attention from the algorithmic graph theory community~\cite{DBLP:conf/stoc/GartlandLMPPR24,DBLP:conf/focs/GartlandL20,DBLP:conf/sosa/PilipczukPR21,DBLP:conf/soda/LokshantovVV14,DBLP:journals/talg/GrzesikKPP22,DBLP:journals/siamcomp/AbrishamiCPRS24,DBLP:journals/algorithmica/DabrowskiFJPPR20,DBLP:journals/siamdm/ChudnovskyKPRS21,10.1145/3708544}.

We consider classes of $(\bull,\chair)$-free and $(\bull,\E)$-free graphs; we refer to \cref{pic:forbidden} for the definition of the forbidden induced subgraphs. Note that $\chair$ is an induced subgraph of $\E$ and thus  $(\bull,\chair)$-free graphs form a proper subclass of  $(\bull,\E)$-free graphs.
The following two theorems are the main result of our paper.

\begin{theorem}\label{thm:chair}
    For every $k \geq 1$, \textsc{Max Partial $k$-Coloring} on $(\bull,\chair)$-free instances with $n$ vertices and clique number $\omega$ can be solved in time $n^{\Oh(k \omega)}$.
\end{theorem}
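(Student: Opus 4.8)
The plan is to solve \textsc{Max Partial $k$-Coloring} by a recursive branching algorithm on a list-and-revenue formulation, driven by a structural decomposition of $(\bull,\chair)$-free graphs. Concretely, I would have the algorithm operate on pairs $(H,\mathcal{L})$, where $H$ is an induced subgraph of $G$ and $\mathcal{L}$ assigns to each vertex a sublist of $\{1,\dots,k\}$ of colours it is still permitted to use; it returns the maximum revenue of a partial colouring of $H$ respecting $\mathcal{L}$, where leaving a vertex uncoloured is always an allowed option. In this language, ``guessing'' a vertex's colour or forbidding colours only shrinks lists, and the weighted/partial aspect is handled uniformly. Since connected components can be solved independently and their revenues added, I may assume $H$ is connected. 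The first reduction I would make is a clique-cutset (Tarjan-style) decomposition: break $H$ along clique separators into \emph{atoms} (induced subgraphs with no clique cutset), solve the problem on each atom, and recombine along the decomposition tree by a standard dynamic program in which the behaviour of the colouring on each separating clique $S$ (with $|S|\le\omega$) is guessed among $(k+1)^{|S|}\le n^{\Oh(\omega)}$ possibilities. This recombination is a linear tree DP and, crucially, does not multiply exponents across its depth, so the whole cost is dominated by the per-atom computation.

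The structural heart of the argument is therefore a lemma asserting that atoms of $(\bull,\chair)$-free graphs are ``simple'' enough to color in time $n^{\Oh(k\omega)}$ — for instance, that every connected atom admits a dominating clique, or more weakly a dominating set of $\Oh(1)$ cliques. (Note that this cannot hold for arbitrary members of the class: a long induced path $P_t$ lies in the class yet has no dominating clique, but such graphs are riddled with single-vertex clique cutsets and so are never atoms; this is exactly why I want the lemma only for atoms.) Chair-freeness is the tool that keeps domination shallow, since an induced $\chair$ is precisely what a vertex would create if it carried a ``far'' pendant reachable through an independent triple in its neighbourhood; bull-freeness is what should upgrade a dominating set to a dominating clique by controlling how triangles extend. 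Granting such a dominating clique $Q$ with $|Q|\le\omega$, the algorithm guesses the behaviour of an optimal colouring on a $\le\omega$-sized certificate for each of the $k$ colours; this pins down $\Oh(k\omega)$ vertices, each ranging over $V(H)$, for a total of $n^{\Oh(k\omega)}$ branches. After the guess, every remaining vertex lies in $N[Q]$, its list is pruned by the colours forced in its neighbourhood, and the residual instance is resolved within a constant number of further rounds.

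The step I expect to be the main obstacle is controlling the recursion depth so as to keep the exponent at $\Oh(k\omega)$ and avoid the extra $\log n$ factor that appears in the $(\bull,\E)$-free theorem. The entire distinction between the two results should live here: for atoms in the $(\bull,\chair)$-free class the domination must be strong enough that a constant number of $n^{\Oh(k\omega)}$-way guessing rounds suffices (each round either strictly lowering the clique number of the residual pieces or trivialising them), whereas in the $\E$-free setting the analogous structural step would only yield balanced separators, forcing a recursion of depth $\Oh(\log n)$ and hence the weaker bound $n^{\Oh(k\omega\log n)}$. The two places where the full force of the $(\bull,\chair)$-free hypothesis must be spent are thus proving the atom-structure lemma in precisely the ``dominating clique'' form and verifying that the guessed colourful boundary genuinely decouples the residual instance into strictly simpler pieces; once these are in hand, the weighted bookkeeping, the list pruning, and the clique-cutset recombination are all routine.
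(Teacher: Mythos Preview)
Your high-level strategy---clique-cutset decomposition followed by a ``dominating clique'' lemma for atoms---runs into a concrete structural obstruction: for every $r\ge 5$, the cycle $C_r$ is $(\bull,\chair)$-free (it is triangle-free, and every proper induced subgraph of a cycle is a disjoint union of paths, hence has maximum degree at most $2$ and cannot contain the degree-$3$ centre of a chair), is $2$-connected with no clique cutset, hence an atom, and yet has domination number $\lceil r/3\rceil$. Since every clique in $C_r$ has at most two vertices, any union of $O(1)$ cliques has $O(1)$ vertices and cannot dominate $C_r$ once $r$ is large. So the structural lemma you are banking on is false, and it cannot be rescued by a constant number of ``further rounds'': long cycles are atoms with no bounded-size dominating set of any shape.

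This is not a peripheral case; the paper's proof is organised precisely around it. One builds a Gy\'arf\'as path $P$ (\cref{thm:gyarfas}), extends it to a maximal induced path, and branches on its length. When $|V(P)|\le 6$ the argument is close in spirit to what you sketch: partition $N(P)$ into at most six layers $A_j\subseteq N(x_j)$ (so $\omega(G[A_j])\le\omega-1$), and for each colour $i$ and layer $j$ guess a set $S_j^i$ of at most \emph{two} vertices that represent how $i$-coloured vertices of $A_j$ are seen from later layers---the bound $|S_j^i|\le 2$ is a separate lemma (\cref{lem:minimal}) using $(\bull,\E)$-freeness. \cref{lem:structureCh} is then precisely the chair-specific fact that every component of $G-N[P]$ is complete to some vertex of $N(P)$, so its clique number drops as well. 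When $|V(P)|\ge 7$, however, the paper does something your plan has no analogue for: \cref{lem:structureE} shows that $V(G)$ decomposes into a fat path or fat cycle $R\supseteq P$ of order at least $7$, a set $D$ complete to $R$ separating it from the rest $T$, and $R$ is solved directly by an $O(6^k\cdot r)$ dynamic program over the fat structure (\cref{lem:fatpath}); $D$ and $T$ are then decoupled by the same two-vertex-representative trick. The chair versus $\E$ distinction you correctly located enters only in how the components of $T$ (respectively of $G-N[P]$ in the short-path case) are handled: chair-freeness forces each such component to be complete to a single vertex of $D$ (respectively $N(P)$), so $\omega$ strictly drops in \emph{every} recursive call and the depth is $\omega$; under mere $\E$-freeness one only gets that each component has at most $n/2$ vertices, whence the extra $\log n$. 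Your clique-cutset preprocessing is neither used nor helpful here, since the hard instances (long fat cycles) are already atoms.
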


\begin{theorem}\label{thm:E}
    For every $k \geq 1$, \textsc{Max Partial $k$-Coloring} on $(\bull,\E)$-free instances with $n$ vertices and clique number $\omega$ can be solved in time $n^{\Oh(k \omega  \log n)}$.
\end{theorem}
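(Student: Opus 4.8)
The plan is to solve the problem by a recursive divide-and-conquer on \emph{vertex-balanced separators}, in the spirit of the quasipolynomial-time algorithms for $P_t$-free graphs. The extra $\log n$ factor in the exponent, as compared with \cref{thm:chair}, is precisely the signature of such a recursion: I expect that $(\bull,\chair)$-freeness is strong enough to yield a bounded-depth decomposition (giving $n^{\Oh(k\omega)}$ directly), whereas under the weaker assumption of $(\bull,\E)$-freeness one is forced to recurse to depth $\Oh(\log n)$, with each level contributing a factor $n^{\Oh(k\omega)}$.

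Concretely, at a node of the recursion handling an induced subgraph $H$ of $G$, I would first invoke a structural lemma (discussed below) to obtain a separator $S \subseteq V(H)$ such that every connected component of $H-S$ has at most $\tfrac{2}{3}|V(H)|$ vertices, and such that the number of distinct ways an optimal partial coloring can behave on $S$ is at most $n^{\Oh(k\omega)}$, where $n=|V(G)|$. I then branch over all such \emph{interface colorings} of $S$: fixing which vertices of $S$ are selected and with which colors, and recording the revenue collected on $S$. Once the interface is fixed, the instance splits into independent subproblems, one per component $C$ of $H-S$, where each subproblem is again \textsc{Max Partial $k$-Coloring} on the $(\bull,\E)$-free graph $H[C]$ with a revenue function modified so as to forbid, at each vertex, the colors now blocked by its already-colored neighbors in $S$. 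Crucially, the problem is stated with an arbitrary revenue function $\rev$, so these induced list restrictions stay within the same problem and the same hereditary class; we recurse on the components and add the optimal revenues. Since each level multiplies the work by a factor $n^{\Oh(k\omega)}$ and the vertex-balance guarantee caps the depth at $\Oh(\log n)$, the natural recurrence $T(n)\le n^{\Oh(k\omega)}\big(\mathrm{poly}(n)+\sum_i T(n_i)\big)$ with $n_i\le \tfrac{2}{3}n$ and $\sum_i n_i\le n$ solves to $\big(n^{\Oh(k\omega)}\big)^{\Oh(\log n)}=n^{\Oh(k\omega\log n)}$, as required; the base case is a constant-size graph, solved by brute force.

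Everything therefore hinges on the structural lemma: that a $(\bull,\E)$-free graph admits a vertex-balanced separator $S$ whose \emph{interface complexity} --- the number of relevant restrictions of an optimal partial $k$-coloring to $S$ --- is only $n^{\Oh(k\omega)}$. The shape of this bound suggests the intended mechanism: for each of the $k$ colors it should suffice to guess $\Oh(\omega)$ representative vertices of $S$ carrying that color, which pins down the behavior of the whole separator and amounts to choosing $\Oh(k\omega)$ vertices out of $n$, i.e.\ $n^{\Oh(k\omega)}$ choices. To establish such a lemma I would start from a heavy, centroid-like vertex $v$ and analyze $N(v)$ together with how the components of $H-N[v]$ attach to it, using the forbidden $\bull$ and $\E$ to sharply constrain these attachments; the aim is to carve out of (a bounded modification of) $N(v)$ a set that is simultaneously a balanced separator and admits the bounded per-color description above. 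This is where the structure of the class is genuinely spent, and I would expect to recycle the neighborhood analysis developed for $(\bull,\chair)$-free graphs in \cref{thm:chair}.

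The main obstacle is proving this lemma with \emph{both} properties at once: balance and bounded interface complexity. Producing a balanced separator in isolation is routine, and producing a structurally simple one is plausible from the forbidden subgraphs, but forcing a single set to be simultaneously balanced and cheap to branch on is delicate. It is exactly here that the gap between $\chair$ and $\E$ is felt: the additional length in $\E$ (which strictly contains $\chair$, with $\E=S_{1,2,2}$ versus $\chair=S_{1,1,2}$) lets neighborhoods interact with far-away vertices in ways a chair-free graph forbids, so one cannot hope for a single global decomposition and is driven into the logarithmic-depth recursion --- which is precisely what turns the polynomial bound of \cref{thm:chair} into the quasipolynomial bound stated here.
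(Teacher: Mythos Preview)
Your high-level scheme --- recursive decomposition with balanced separators, bounded branching per level, and logarithmic depth --- is exactly right in spirit, but the proposal has a genuine gap: you explicitly leave the key structural lemma unproved, and the mechanism you conjecture for it (``guess $\Oh(\omega)$ representative vertices per color on the separator'') is not what actually works. The paper's route is more concrete and the accounting is distributed differently.

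First, balance is not obtained from a centroid vertex and its neighborhood; it comes from the Gy\'arf\'as path argument (\cref{thm:gyarfas}): one finds an induced path $P$ so that every component of $G-N[P]$ has at most $n/2$ vertices. This is the source of the $\log n$ factor. Second, the interface on $N[P]$ is \emph{not} handled by guessing $\Oh(\omega)$ vertices per color. Instead one case-splits on the length of $P$. If $|V(P)|\le 6$, the closed neighborhood $N[P]$ is partitioned into at most six layers $A_1,\dots,A_p$, each dominated by a single vertex of $P$; for every layer and every color one guesses a minimal set of at most \emph{two} representatives that dominates the later layers (this ``two'' is a short structural lemma, \cref{lem:minimal}, using both the $\bull$ and the $\E$). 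If $|V(P)|\ge 7$, a dedicated structural lemma (\cref{lem:structureE}) shows that $V(G)$ splits into a fat path/cycle $R$, a set $D$ complete to $R$, and the rest $T$; one solves $R$ by a simple dynamic program, and again guesses two representatives per color in $D$ to decouple $D$ from $T$. In both cases the branching per level is only $n^{\Oh(k)}$, not $n^{\Oh(k\omega)}$.

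The parameter $\omega$ therefore does not enter through the per-level branching at all; it enters through the recursion itself. Every recursive subinstance either has clique number at most $\omega-1$ (because the relevant piece lies in the neighborhood of a fixed vertex) or has at most $n/2$ vertices (because it sits in a component of $G-N[P]$). The two-parameter recurrence
\[
F(n,\omega)\ \le\ n^{\Oh(k)}\cdot \max\bigl(F(n,\omega-1),\,F(n/2,\omega)\bigr)
\]
solves to $n^{\Oh(k\omega\log n)}$. Your single-parameter recursion with $n^{\Oh(k\omega)}$ branching and depth $\Oh(\log n)$ arrives at the same bound, but you would still need to supply a separator that is simultaneously balanced and has interface complexity $n^{\Oh(k\omega)}$; neither the centroid-neighborhood idea nor the ``$\Oh(\omega)$ representatives per color'' heuristic is substantiated, and it is not clear how to make either work without essentially rediscovering the Gy\'arf\'as path plus the layer/fat-path analysis above.
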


While the running time in \cref{thm:E} is not polynomial, but quasipolynomial in $n$, it still gives a strong evidence that the problem is not \textsf{NP}-hard. Indeed, in such a case all problems in \textsf{NP} can be solved in quasipolynomial time, which is unlikely according to our current understanding of complexity theory.

Interestingly, the algorithm in \cref{thm:chair,thm:E} is exactly the same, the only difference is the complexity analysis. We start with a careful analysis of the structure of $(\bull,\E)$-free graphs.
We observe that after exhaustively guessing a constant number of vertices and their color, we can decompose the input graph into parts that (1) are ``simpler'' and (2) the connections between the parts are ``well-structured.''
The first property allows us to call the algorithm recursively for each part, in order to obtain their corresponding partial solutions. Then the second property is used to combine these partial solutions into the solution of the input instance.
By ``simpler'' we typically mean that the clique number of a part is smaller than the clique number of the graph itself, but in one case (for $(\bull,\E)$-free graphs) ``simpler'' means just ``multiplicatively smaller.''
This explains the running time in \cref{thm:chair,thm:E}.
Let us remark that the idea of using the clique number to bound the complexity of an algorithm already appears in the literature~\cite{DBLP:journals/siamdm/ChudnovskyKPRS21,DBLP:journals/corr/abs-2412-14836}.

Notice that if we are just interested in solving \textsc{List}-$k$-\textsc{Coloring}, we can safely assume that the clique number of each instance is at most $k$: otherwise we can safely reject. Thus we immediately obtain the following corollaries.

\begin{restatable}{corollary}{corchair}
\label{cor:chair}
    For every $k \geq 3$, \textsc{List $k$-Coloring} on $n$-vertex $(\bull,\chair)$-free graphs can be solved in time $n^{\Oh(k^2)}$.
\end{restatable}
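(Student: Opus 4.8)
The plan is to derive \cref{cor:chair} directly from \cref{thm:chair} by exploiting the simple observation that \textsc{List $k$-Coloring} instances with large clique number are trivially infeasible. First I would reduce \textsc{List $k$-Coloring} to \textsc{Max Partial $k$-Coloring}: given a graph $G$ with lists $L : V(G) \to 2^{\{1,\ldots,k\}}$, define the revenue function by $\rev(v,c) = 1$ if $c \in L(v)$ and $\rev(v,c) = 0$ otherwise, exactly as described in the introduction. Then $G$ admits a proper list coloring if and only if the optimal value of \textsc{Max Partial $k$-Coloring} equals $|V(G)|$: any full proper coloring respecting the lists achieves total revenue $|V(G)|$, and conversely a partial coloring of that value must color every vertex with a color from its list (since each vertex contributes at most $1$, and $0$ when colored off-list or left uncolored).

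The key step is the clique-number bound. Any properly $k$-colored graph has clique number at most $k$, since the vertices of a clique must receive pairwise distinct colors. Hence if $\omega(G) > k$, the graph $G$ is not $k$-colorable at all, let alone list-colorable, so we may immediately output \emph{no}. Otherwise $\omega = \omega(G) \le k$, and we invoke \cref{thm:chair} on the constructed instance. Since $(\bull,\chair)$-freeness is inherited by $G$ regardless of the revenue function, the hypotheses of \cref{thm:chair} are met, and the running time is
\[
n^{\Oh(k \omega)} = n^{\Oh(k \cdot k)} = n^{\Oh(k^2)},
\]
as claimed.

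I do not anticipate a genuine obstacle here: the corollary is essentially a packaging of \cref{thm:chair} together with two elementary facts (the revenue encoding of lists and the clique-number lower bound on the chromatic number). The only point requiring a modicum of care is verifying that the encoded instance correctly captures feasibility — specifically, that setting off-list revenues to $0$ rather than forbidding those colors outright does not let the optimum reach $|V(G)|$ through an illegal coloring, which follows because each vertex caps its contribution at $1$ and attains it only on a legal, listed color. With that checked, the substitution $\omega \le k$ into the exponent of \cref{thm:chair} yields the stated bound.
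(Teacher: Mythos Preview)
Your proposal is correct and matches the paper's own argument essentially verbatim: reduce \textsc{List $k$-Coloring} to \textsc{Max Partial $k$-Coloring} via the $0/1$ revenue encoding, reject if $\omega > k$, and otherwise plug $\omega \le k$ into \cref{thm:chair} to get $n^{\Oh(k^2)}$. The extra care you take in verifying that the revenue encoding faithfully captures feasibility is fine but not needed beyond what the paper already sketches in the introduction.
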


\begin{restatable}{corollary}{corE}
\label{cor:E}
    For every $k \geq 3$,  \textsc{List $k$-Coloring} on $n$-vertex $(\bull,\E)$-free graphs can be solved in time $n^{\Oh(k^2 \log n)}$.
\end{restatable}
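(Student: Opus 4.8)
The plan is to obtain \cref{cor:E} as an immediate consequence of \cref{thm:E}, using the observation (made in the paragraph preceding the statement) that for \textsc{List $k$-Coloring} we may assume the clique number is at most $k$. First I would cast an instance $(G,L)$ of \textsc{List $k$-Coloring} as an instance of \textsc{Max Partial $k$-Coloring} on the \emph{same} graph $G$, exactly as sketched in the introduction: put $\rev(v,c)=1$ if $c \in L(v)$ and $\rev(v,c)=0$ otherwise. Since each vertex contributes revenue at most $1$, any partial coloring has revenue at most $|V(G)|$, and revenue exactly $|V(G)|$ is attained precisely by those proper $k$-colorings of all of $G$ that respect every list. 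Hence $(G,L)$ is a yes-instance of \textsc{List $k$-Coloring} if and only if the optimum of the constructed \textsc{Max Partial $k$-Coloring} instance equals $|V(G)|$. Crucially, the underlying graph is unchanged, so the instance stays $(\bull,\E)$-free and \cref{thm:E} applies.

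Second, I would reduce to the case $\omega \leq k$. If $G$ contains a clique on more than $k$ vertices, this clique admits no proper $k$-coloring, so $G$ is not $k$-colorable and we can safely reject. Whether such a clique exists can be decided by brute-force enumeration over all vertex subsets of size $k+1$, which costs $n^{\Oh(k)}$ time. If no such clique is found, then $\omega \leq k$.

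Finally, I would invoke \cref{thm:E} on the \textsc{Max Partial $k$-Coloring} instance. Its running time is $n^{\Oh(k\omega \log n)}$; substituting the bound $\omega \leq k$ yields $n^{\Oh(k^2 \log n)}$, which matches the claimed complexity. Comparing the optimum revenue against $|V(G)|$ then answers the original \textsc{List $k$-Coloring} question, and the $n^{\Oh(k)}$ preprocessing is dominated by this bound.

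There is no deep obstacle here, since the statement is a corollary; the only point that needs care is the clique-number reduction. One must make sure that certifying $\omega \leq k$ (or rejecting) does not dominate the overall running time, which is exactly why the crude $n^{\Oh(k)}$ enumeration of potential $(k+1)$-cliques is adequate. With that in place, the result is a direct application of \cref{thm:E}.
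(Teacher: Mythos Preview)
Your proposal is correct and follows essentially the same approach as the paper: reduce \textsc{List $k$-Coloring} to \textsc{Max Partial $k$-Coloring} via the $0/1$ revenue encoding, reject if $\omega>k$, and then apply \cref{thm:E} with $\omega\le k$. The only extra detail you add is the explicit $n^{\Oh(k)}$ clique check, which the paper leaves implicit but which is harmless and indeed dominated by the main running time.
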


Furthermore, using the win-win approach of Chudnovsky et al.~\cite{DBLP:journals/siamdm/ChudnovskyKPRS21}, we can show that for every $k$, the \textsc{Max Partial $k$-Coloring} problem can be solved in \emph{subexponential time} in $(\bull,\E)$-free graphs, with no restrictions on the clique number.

\begin{restatable}{corollary}{corsubexp}
\label{cor:subexp}
    For every $k \geq 1$, \textsc{Max Partial $k$-Coloring} on $(\bull,\E)$-free instances with $n$ vertices can be solved in time $2^{\Oh(k \cdot \sqrt{n} \log^{3/2} n)}$.
\end{restatable}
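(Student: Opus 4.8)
The plan is to combine \cref{thm:E} with a branching scheme through a win-win on the clique number, in the spirit of Chudnovsky et al.~\cite{DBLP:journals/siamdm/ChudnovskyKPRS21}. Fix the threshold $\tau := \lceil \sqrt{n/\log n}\,\rceil$, where $n$ is the number of vertices of the original instance. The guiding observation is that \cref{thm:E} is fast exactly when the clique number is small: on an induced subgraph with $m \le n$ vertices and clique number at most $\tau$, it runs in time $m^{\Oh(k\tau \log m)} \le 2^{\Oh(k \tau \log^2 n)} = 2^{\Oh(k\sqrt n \log^{3/2} n)}$, by the choice of $\tau$. On the other hand, a clique interacts rigidly with any partial $k$-coloring: if $K$ is a clique then $|K \cap X| \le k$ in every feasible solution $(X,c)$, since $G[X]$ is properly $k$-colored. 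Hence a large clique can be eliminated cheaply by guessing the few of its vertices that survive.

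Concretely, I would run a recursive procedure on induced subgraphs $H$ (each carrying a modified revenue function). First, attempt to find a clique of size $\tau+1$ in $H$ by examining all $(\tau+1)$-subsets of $V(H)$; since $\binom{n}{\tau+1} \le 2^{\Oh(\tau \log n)} = 2^{\Oh(\sqrt n \log^{1/2} n)}$, this brute-force search stays within budget and, notably, sidesteps any need for an exact maximum-clique subroutine. If no such clique exists, then $\omega(H) \le \tau$ and we solve $H$ directly via \cref{thm:E}. Otherwise we have a clique $K$ with $|K| = \tau+1$, and we branch over all choices of the set $S \subseteq K$ with $|S| \le k$ that lies in the solution, together with a proper coloring of $S$; there are $\sum_{j \le k} \binom{\tau+1}{j} k^j \le 2^{\Oh(k \log n)}$ such choices. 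For each choice we commit $S$ (collecting its revenue), discard all of $K$ from the graph, record the coloring constraints that $S$ imposes on its neighbors, and recurse on $H - K$, which has at least $\tau+1$ fewer vertices and is again $(\bull,\E)$-free. Returning the best value over all branches yields the optimum.

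The coloring constraints are the one point needing a small argument: when $u \in S$ is fixed with color $c(u)$, every neighbor $v$ of $u$ must avoid color $c(u)$ in any consistent extension. I would encode this simply by setting $\rev(v,c(u)) := 0$ in the recursive instance, keeping us inside the $\mathbb{Q}_{\ge 0}$-valued formulation; this is sound because an optimal partial coloring never gains from using a zero-revenue color (uncoloring such a vertex preserves the revenue and only relaxes the coloring), so forbidding the color does not change the optimum. With correctness in place, the running time follows from a clean balancing: each branching step removes more than $\tau$ vertices, so the recursion has depth at most $n/\tau = \Oh(\sqrt n \log^{1/2} n)$ and hence at most $\bigl(2^{\Oh(k\log n)}\bigr)^{n/\tau} = 2^{\Oh(k \sqrt n \log^{3/2} n)}$ leaves; each leaf invokes \cref{thm:E} at cost $2^{\Oh(k \sqrt n \log^{3/2} n)}$, and the per-node clique search is dominated by this. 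I expect the main point to get right to be precisely this balancing—choosing $\tau = \sqrt{n/\log n}$ so that the quasipolynomial leaf cost and the branching cost meet at the target exponent—together with confirming that the depth bound survives the fact that $\tau$ is measured against the original $n$ rather than the shrinking subinstances.
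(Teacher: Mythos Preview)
Your proposal is correct and follows essentially the same approach as the paper: both use the win-win of Chudnovsky et al.\ with threshold $\sqrt{n/\log n}$, brute-force clique detection, branching over the at-most-$k$ surviving clique vertices with their colors (encoded by zeroing revenues on neighbors), and \cref{thm:E} once the clique number is below threshold. The only cosmetic differences are that you fix $\tau$ against the original $n$ and explicitly bound the recursion depth, whereas the paper writes the recursion with a shrinking $n$; the resulting bound is the same.
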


We believe that the problem is actually polynomial-time solvable, and we state this as a conjecture.

\begin{restatable}{conjecture}{conj}
\label{conj}
    For every $k \geq 1$, \textsc{Max Partial $k$-Coloring} on $(\bull,\E)$-free graphs can be solved in polynomial time.
\end{restatable}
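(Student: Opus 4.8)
The plan is to keep the recursive decomposition behind \cref{thm:chair,thm:E} and to eliminate the two features that make the bound in \cref{thm:E} super-polynomial for unbounded $\omega$: the factor $\omega$ in the exponent and the extra factor $\log n$. As discussed above, the $\log n$ appears only because one case of the decomposition of a $(\bull,\E)$-free graph lowers the clique number of a part merely multiplicatively, forcing $\Oh(\log n)$ rounds of halving, whereas in the $(\bull,\chair)$-free case every part has clique number at most $\omega-1$; the factor $\omega$ reflects the cost of resolving a separator whose size may be as large as $\omega$. A polynomial-time algorithm must overcome both obstacles, so I would pursue two largely independent improvements.

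The first improvement is to replace the offending multiplicative case by a decomposition in which every part again has clique number at most $\omega-1$; this would collapse the $\log n$ factor and, in the bounded-$\omega$ regime (in particular for \textsc{List}-coloring, where $\omega \le k$), already yield the polynomial bound $n^{\Oh(k\omega)}=n^{\Oh(k^2)}$. The second improvement targets the $\omega$ in the exponent through the following observation: in any proper $k$-coloring of $G[X]$, at most $k$ vertices of a clique receive a color at all, one per color class. Hence a clique of size $s$ has only $\sum_{j \le k}\binom{s}{j}\binom{k}{j}j! = n^{\Oh(k)}$ relevant partial colorings, a quantity independent of $\omega$. If the decomposition can be arranged so that all separators are (boundedly many) cliques, then each separator can be resolved in time $n^{\Oh(k)}$ rather than $n^{\Oh(k\omega)}$, and the partial solutions of the parts can be recombined by a dynamic program over the guessed boundary colorings.

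The main obstacle is exactly the configuration flagged above, in which the structural analysis yields only a part of clique number at most $c\omega$. I expect that there a large clique fails to be separated from the rest of the graph by a clique cutset, so neither the additive-drop argument nor the clique-separator charging applies directly. Overcoming it should require a sharper understanding of how a large clique can embed in a $(\bull,\E)$-free graph: a natural target is a dichotomy asserting that, after deleting a bounded number of guessed vertices, any large clique either becomes a module (which can be contracted and recursed upon) or is separated by a clique cutset. Establishing such a dichotomy is, in my view, the crux of the conjecture, and it is precisely what the analysis leading to \cref{thm:E} stops short of.

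As a base case, $k=1$ is \textsc{Max Weight Independent Set}, and its polynomial-time solvability on $(\bull,\E)$-free graphs would both confirm the conjecture in the simplest case and indicate which structural tool (modular or clique-cutset decomposition) to push to general $k$. A complementary route worth running in parallel is width-based: should the instances surviving the initial guessing have bounded mim-width together with a branch decomposition computable in polynomial time, then \textsc{Max Partial $k$-Coloring}, being a locally checkable vertex-partition problem, would be solvable in polynomial time for every fixed $k$ with no clique-number recursion at all. I would regard the additive clique-separator decomposition and the bounded-mim-width argument as the two most promising attacks, sharing the large-clique structural dichotomy as their common bottleneck.
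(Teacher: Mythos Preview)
The statement you were asked to address is \cref{conj}, which in the paper is explicitly a \emph{conjecture}: the authors do not prove it and leave it open, noting only that the situation resembles the $P_5$-free case that was eventually settled by substantially different methods. There is therefore no ``paper's own proof'' to compare your attempt against.

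Your submission is, accordingly, not a proof but a research outline, and your hedged language (``I would pursue'', ``I expect'', ``in my view'') shows you are aware of this. As an outline it is sensible: the two obstacles you isolate---the $\log n$ coming from the halving branch in Case~B and the factor $\omega$ in the exponent coming from the separator guesses---are precisely what distinguishes \cref{thm:E} from a polynomial bound, and your idea of arranging clique separators so that only $n^{\Oh(k)}$ boundary colorings need be enumerated (at most $k$ vertices of a clique can be colored at all) is a natural angle. But none of the decisive steps is actually carried out. The key structural dichotomy you posit (after guessing boundedly many vertices, every large clique becomes a module or is cut off by a clique cutset) is stated as a target, not proved, and there is no argument that $(\bull,\E)$-free graphs enjoy it. The mim-width route is likewise speculative: it is not known that $(\bull,\E)$-free graphs, or the subinstances surviving your preprocessing, have bounded mim-width with a polynomially computable decomposition. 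Even the base case $k=1$ (\textsc{Max Weight Independent Set} on $(\bull,\E)$-free graphs) is left as something to be checked rather than established. In short, this is a coherent plan of attack on an open problem, but it contains no proof and the conjecture remains open exactly as in the paper.
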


Let us remark that the situation here is similar to the case of \textsc{Odd Cycle Transversal} (and also  \textsc{Max Partial $k$-Coloring} and even slightly further) for $P_5$-free graphs. First, an algorithm with running time $n^{\Oh(k \omega)}$, i.e., similar to our \cref{thm:chair,thm:E}, was given by Chudnovsky, King, Pilipczuk, Rzążewski, and Spirkl~\cite{DBLP:journals/siamdm/ChudnovskyKPRS21}. Obtaining a polynomial-time algorithm was a well-known open problem in the area~\cite{DBLP:journals/siamdm/ChudnovskyKPRS21,DBLP:journals/algorithmica/DabrowskiFJPPR20,DBLP:journals/dagstuhl-reports/ChudnovskyPS19}, that was solved recently by Agrawal, Lima, Lokshtanov, Rzążewski, Saurabh, and Sharma~\cite{10.1145/3708544} for \textsc{Odd Cycle Transversal}.
Then the algorithm was generalized to \textsc{Max Partial $k$-Coloring} independently by Henderson, Smith-Roberge, Spirkl, and Whitman~\cite{henderson2024maximumkcolourableinducedsubgraphs} (even for a slightly wider class), and by Lokshtanov, Rzążewski, Saurabh, Sharma, and Zehavi~\cite{DBLP:journals/corr/abs-2410-21569} (even for a slightly more general problem).
However, the methods used there rely heavily on the structure of $P_5$-free graphs, and it is not clear to what extend they can be applied to \cref{conj}.

\section{Preliminaries}

For an integer $n$, by $[n]$ we denote the set $\{1,\ldots,n\}$.
All logarithms in the paper are of base 2.

We say that sets $A_1,\ldots,A_p$ form a \emph{partition} of $X$, if they are pairwise disjoint and their union is equal to~$X$. Note that we do not insist that sets $A_i$ are non-empty.

\paragraph{Graph theory.}
Let $G$ be a graph and let $X,Y$ be two disjoint subsets of $V(G)$.
We say that $X$ is \emph{complete} to $Y$, or that $X$ and $Y$ are \emph{complete to each other}, if $xy \in E(G)$ for any $x \in X$ and $y \in Y$.
Similarly, $X$ and $Y$ are \emph{anticomplete} if there are not edges between $X$ and $Y$.
If one of the sets $X,Y$ is a singleton, say $X = \{x\}$, we shortly say that $x$ is (anti)complete to $Y$.

Let $v$ be a vertex of $G$, and $X$ be a subset of $V(G)$.
By $N(v)$ we denote the set of all neighbors of $v$, and we define $N[v] = N(v) \cup \{v\}$. We also write $N_X(v)$ as a shorthand for $N(v) \cap X$.
We also define $N[X] = \bigcup_{v \in X} N[v]$ and $N(x) = N[X] \setminus X$.

As all subgraphs are in the paper are induced, we will sometimes identify them with their vertex sets.

\paragraph{Max Partial $k$-Coloring.}
We assume that all arithmetic operations on the values of the revenue function are performed in constant time.
Let $(G,\rev)$ be an instance of \textsc{Max Partial $k$-Coloring} and let $(X,c)$ be an optimum solution.
Note that without loss of generality we can assume that for every $v \in X$ it holds that $\rev(v,c(v))>0$.
Indeed, if there is $v \in X$ such that $\rev(v,c(v))=0$, we can obtain a solution of the same weight by removing $v$ from $X$ (and uncoloring it).

Thus in our algorithms we will use the convention that we never color a vertex using a color with zero revenue.
Consequently, in order to indicate that some $v$ \emph{cannot} receive color $i$, we will sometimes modify the revenue function by setting $\rev(v,i)=0$. We will call this operation \emph{forbidding color $i$ to $v$}.

Any instance $(G',\rev')$ of \textsc{Max Partial $k$-Coloring}, where $G'$ is an induced subgraph of $G$,
and $\rev'$ was obtained from $\rev$ by restricting the domain to $V(G')$ and (possibly) forbidding some colors to some vertices, will be called a \emph{subinstance} of $(G,\rev)$.

Slightly abusing the notation, we will keep denoting the modified revenue function (either by restricting to the vertex set of the current subinstance, or by forbidding some color) by $\rev$.

\paragraph{Gy\'arf\'as path argument.}
We will use the following known result, usually referred to as the \emph{Gy\'arf\'as path argument}, see~\cite{gyarfas2,DBLP:journals/algorithmica/BacsoLMPTL19,DBLP:journals/siamcomp/ChudnovskyPPT24}. 

\begin{theorem}\label{thm:gyarfas}
    Let $G$ be a connected graph on $n$ vertices, and let $v \in V(G)$.
    In polynomial time we can find an induced path $P$, starting at $v$, such that every component of $G - N[P]$ has at most $n/2$ vertices.
\end{theorem}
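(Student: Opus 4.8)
The plan is to construct $P$ greedily, growing it one vertex at a time while tracking a single ``heavy'' component of $G - N[P]$, i.e., a component with more than $n/2$ vertices. The first observation is that at any moment there is at most one such component, since two disjoint vertex sets of size exceeding $n/2$ cannot coexist in an $n$-vertex graph. So as long as the current path $P$ leaves a heavy component $C$, this $C$ is uniquely determined, and the algorithm stops exactly when no heavy component remains --- at which point every component of $G - N[P]$ has at most $n/2$ vertices, as required.

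The core of the construction is maintaining, after $i$ steps, an induced path $P_i = (p_0, \ldots, p_i)$ with $p_0 = v$ together with the heavy component $C_i$ of $G - N[P_i]$, subject to the invariant that $p_i$ has a neighbor in the previous heavy component $C_{i-1}$ and that $C_i \subsetneq C_{i-1}$. I would initialize with $P_0 = (v)$ and $C_0$ the heavy component of $G - N[v]$ (stopping immediately if none exists). For the first extension, note that since $G$ is connected and $C_0$ is a component of $G - N[v]$, every vertex adjacent to $C_0$ lies in $N(v)$; choosing $p_1$ to be any such vertex gives an induced path $(v, p_1)$ with $p_1$ adjacent to $C_0$, which establishes the invariant for $i=1$.

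The hard --- and central --- step is the generic extension, where one must produce $p_{i+1}$ that simultaneously (a) is adjacent to $p_i$, (b) is non-adjacent to $p_0, \ldots, p_{i-1}$ so the path stays induced, and (c) is adjacent to $C_i$ so the heavy component strictly shrinks. I would obtain all three at once from the invariant as follows. Because $C_{i-1}$ is disjoint from $N[\{p_0,\ldots,p_{i-1}\}]$, the set $C_i$ is exactly a connected component of $C_{i-1}$ after deleting $N[p_i] \cap C_{i-1}$, which is nonempty since $p_i$ is adjacent to $C_{i-1}$. As $C_{i-1}$ is connected and $C_i$ is a proper subset, $C_i$ must have a neighbor $w$ inside $N[p_i]\cap C_{i-1}$. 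This single vertex $w$ does all the work: $w \in N(p_i)$ gives (a); $w \in C_{i-1}$, which avoids $N[\{p_0,\ldots,p_{i-1}\}]$, gives (b); and $w$ adjacent to $C_i$ gives (c). Setting $p_{i+1} = w$ preserves the invariant, and since $N[w]$ meets $C_i$, the new heavy component satisfies $C_{i+1} \subseteq C_i \setminus N[w] \subsetneq C_i$.

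Finally I would argue termination and running time. Since $|C_i|$ strictly decreases at each step and is always positive, the process halts after at most $n$ iterations, and each iteration only computes connected components of $G - N[P]$ and searches the neighborhood of $p_i$, which is polynomial. When it halts no component exceeds $n/2$, yielding the desired induced path $P$ starting at $v$. I expect the only genuinely delicate point to be the correctness of the extension step, namely verifying that the vertex $w$ produced by the connectivity argument really is non-adjacent to all earlier path vertices; this is precisely what the invariant $w \in C_{i-1}$ guarantees, and it is what keeps the path induced throughout.
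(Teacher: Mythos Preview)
The paper does not actually prove \cref{thm:gyarfas}; it states the result as the well-known \emph{Gy\'arf\'as path argument} and cites prior work for the proof. Your proposal is precisely that standard argument, and it is correct.

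One small point worth making explicit: when you assert that ``the set $C_i$ is exactly a connected component of $C_{i-1}$ after deleting $N[p_i]\cap C_{i-1}$,'' you are using that $C_i\subseteq C_{i-1}$. This follows because every component of $G-N[P_{i-1}]$ other than $C_{i-1}$ already has at most $n/2$ vertices, and deleting further vertices cannot enlarge any of them; hence any heavy component of $G-N[P_i]$ must sit inside $C_{i-1}$. You allude to this via your uniqueness observation at the start, but the containment is the piece actually needed, and stating it plainly would tighten the argument.
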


\section{Auxiliary results}
Before we proceed to proofs of main results, let us show some auxiliary results concerning the structure of considered graphs.

\subsection{Structural lemmas}

A \emph{fat path} (resp. \emph{fat cycle}) is a graph whose vertex set is partitioned into non-empty sets $V_1,V_2,\ldots,V_{r}$,
such that the sets $V_i,V_j$ are complete to each other if $j = i+1$ (resp. $j = i+1 \pmod r$) and anticomplete otherwise.
The number~$r$, i.e., the number of sets, is called the \emph{order} of the fat path (resp. cycle).

The following technical lemma is the key building block for the proof of \cref{thm:chair,thm:E}.
\begin{lemma}\label{lem:structureE}
    Let $G$ be a connected $(\bull,\E)$-free graph, containing a maximal induced path $P$ with at least 7 vertices.
    In polynomial time one can compute a partition of $V(G)$ into sets $R,D,T$ with the following properties.
    \begin{enumerate}
        \item $R$ is non-empty and induces a fat path of order at least 7 or a fat cycle of order at least 8,
        \item $D$ is complete to $R$, and separates $R$ and $T$,
        \item every component of $G - (R \cup D)$ is contained in one component of $G - N[Q]$.        
    \end{enumerate}
    Furthermore, if $G$ is $(\bull,\chair)$-free, then every component of $G - (R \cup D)$ is complete to some vertex in $D$.
\end{lemma}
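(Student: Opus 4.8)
The plan is to analyze how the graph attaches to the given maximal induced path $P = v_1 v_2 \cdots v_r$, $r \geq 7$, and to read off the decomposition from the local structure of $N(P)$. First I would sort every vertex $w \in N(P)$ by its trace $N_P(w)$ on the path. The guiding heuristic is that in a $(\bull,\E)$-free graph an external vertex can only ``see'' a very short window of $P$, so the only relevant traces should be: $B_i := \{w : N_P(w) = \{v_{i-1}, v_{i+1}\}\}$ and $C_i := \{w : N_P(w) = \{v_{i-1}, v_i, v_{i+1}\}\}$ for interior $i$; the endpoint classes $A_1$ (seeing $v_2$, possibly also $v_1$) and $A_r$ (seeing $v_{r-1}$, possibly also $v_r$); the ``jump'' class $A_0$ of vertices seeing exactly $\{v_1, v_r\}$; and the class $D$ of vertices complete to $P$.

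The crux of the proof --- and the step I expect to be by far the most laborious --- is to show that these classes already exhaust $N(P)$; equivalently, that no other trace can occur. This is a case analysis driven entirely by the two forbidden graphs, with maximality of $P$ doing the book-keeping at the ends (a vertex seeing only $v_1$, or only $v_r$, would prolong $P$). The interior cases run roughly as follows: four consecutive neighbors $v_i, \ldots, v_{i+3}$ of $w$, together with a path vertex just outside that window, create a $\bull$; an isolated consecutive pair $v_i, v_{i+1}$ that is not at an end yields a $\bull$ with $v_{i-1}, v_{i+2}$; a $w$ with no two consecutive neighbors and no two neighbors at distance $2$ either sits inside a $5$-vertex window and completes an $\E$, or attaches only near an end; and a $w$ owning two neighbors at distance $2$ plus a third, farther neighbor always extends the resulting induced star to an $\E$. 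Verifying that these sub-cases are exhaustive is the heart of the argument, and it is here that $(\bull,\E)$-freeness is used in full.

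Granting the partition, I would assemble the fat structure. Setting $V_1 := \{v_1\} \cup A_1$, $V_i := \{v_i\} \cup B_i \cup C_i$ for $1 < i < r$, and $V_r := \{v_r\} \cup A_r$, I take $R := \bigcup_i V_i = N[P] \setminus D$. A second, shorter round of $\bull/\E$ checks verifies that consecutive groups are complete to each other while non-consecutive groups are anticomplete, so $R$ is a fat path of order $r \geq 7$. A brief lemma shows that $A_0$ and $A_1 \cup A_r$ cannot both be non-empty; when $A_0 \neq \emptyset$ its vertices form one extra group closing $R$ into a fat cycle, realized by the induced cycle $v_1 v_2 \cdots v_r w v_1$ of length $r+1 \geq 8$ for $w \in A_0$. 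This accounts for the two alternatives in property (1). Another ``non-edge forces a $\bull$'' check (using two path vertices near the position of $w$) gives that $D$ is complete to $R$. Since $N[P] \subseteq R \cup D$, the set $D$ separates $R$ from $T := V(G) \setminus (R \cup D)$, and every component of $G - (R \cup D)$ sits inside a single component of $G - N[P]$; this yields properties (2) and (3). As every class is defined purely by adjacencies to the fixed path $P$, the whole decomposition is computable in polynomial time.

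For the $(\bull,\chair)$-free strengthening I would argue by contradiction. Suppose a component $K$ of $G - (R \cup D)$ is complete to no vertex of $D$. As $G$ is connected and $D$ separates $K$ from $R$, there are adjacent $u, u' \in K$ and a vertex $w \in D$ with $uw \in E(G)$ and $u'w \notin E(G)$. Since $D$ is complete to $R$, $w$ is adjacent to both $v_1$ and $v_3$; and since $u, u' \in T$ are separated from $R$, neither is adjacent to $v_1$ or $v_3$, while $v_1 v_3 \notin E(G)$ because $P$ is induced. Then $\{u', u, w, v_1, v_3\}$ induces a $\chair$, with $w$ of degree three adjacent to $u, v_1, v_3$ and $u'$ a pendant attached to $u$ --- a contradiction. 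Hence every component of $G - (R \cup D)$ is complete to some vertex of $D$.
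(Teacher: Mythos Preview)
Your plan closely parallels the paper's, and the classification of $N(P)$ into $A_0, A_1, A_r, B_i, C_i, D$ together with the assembly of the fat path/cycle is correct in outline; the $(\bull,\chair)$-free addendum is argued exactly as in the paper. There is, however, one genuine gap.

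You write that ``since $N[P] \subseteq R \cup D$, the set $D$ separates $R$ from $T$.'' This does not follow. In your construction $R \cup D = N[P]$, so every $u \in T$ is anticomplete to $V(P)$; but vertices of $R \setminus V(P)$ --- those lying in $A_0 \cup A_1 \cup A_r \cup B \cup C$ --- may a~priori have neighbours in $T$, and nothing you have said excludes this. The paper spends a separate claim on exactly this point: if $u \in T$ were adjacent to some $w \in V_i \setminus \{v_i\}$, then (depending on whether $wv_i \in E(G)$) one of $\{u,w,v_i,v_{i\pm1},v_{i\pm2}\}$ induces a $\bull$, or one of $\{u,w,v_{i\pm1},v_i,v_{i\pm2},v_{i\pm3}\}$ induces an $\E$. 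Without this step property~(2) is unproved, and the $\chair$ argument at the end also breaks, since it relies on $u,u' \in T$ being non-adjacent to $v_1, v_3$.

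A smaller difference concerns the cycle case. You keep $A_0$ as a group of its own and close $R$ into a fat cycle of order $r+1$; the paper instead picks one vertex $v \in A_0$, sets $Q$ to be the induced cycle $v_1 \cdots v_r v$, and reruns the entire classification relative to $Q$ (any further $A_0$-vertices then fall into the $B$- or $C$-class at the position of $v$). Your route can be made to work, but the promised ``brief lemma'' that $A_0 \neq \emptyset$ forces $A_1 \cup A_r = \emptyset$, as well as the anticompleteness of $A_0$ to the $B_i \cup C_i$ part of each interior $V_i$, and the absence of edges from $A_0$ to $T$, all still need to be carried out. The paper's device of passing to $Q$ first folds these extra checks into the single classification claim and is somewhat cleaner.
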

\begin{proof} 
    If there is a vertex $v$ adjacent to both endvertices of $P$, and no other vertex of $P$, then we define $Q$ to be the induced cycle formed by vertices of $P$ and $v$.
    Otherwise, we define $Q=P$.
    We will show that $Q$ extends to the desired fat path (or fat cycle) $R$.

    Let us denote the consecutive vertices of $Q$ by $v_1,\ldots,v_r$; recall that $r \geq 7$.
    If $Q$ is a cycle, indices are computed modulo $r$. 
    We define the following subsets of $N(Q)$:
    \begin{align*}
        A_1= & \big \{w\in N(Q) ~|~  \{v_2\} \subseteq N_Q(w)\subseteq\{v_1,v_2\} \big \},\\
        A_r= & \big \{w\in N(Q) ~|~ \{v_{r-1}\} \subseteq N_Q(w)\subseteq\{v_{r-1},v_{r}\} \big \},\\        
        D= & \big \{w\in N(Q) ~|~ N_Q(w)=Q \big \}.
    \end{align*}
    Moreover, for any $i \in \{2, \ldots, r-2\}$ (or for $i \in \{1, \ldots, r\}$, if $Q$ is a cycle) we define:
    \begin{align*}
        B_i= & \big \{w\in N(Q) ~|~ N_Q(w)=\{v_{i-1},v_{i+1}\}  \big  \},\\        
        C_i= & \big \{w\in N(Q) ~|~ N_Q(w)=\{v_{i-1},v_{i},v_{i+1}\} \big \}.
    \end{align*}
Additionally, let us define $B = \bigcup_{i} B_i$ and $C = \bigcup_{i} C_i$. Obviously $B_i,B_j$ (resp. $C_i,C_j$) are disjoint whenever $i \neq j$.
\begin{claim}\label{clm:NQ}
    Sets $A_1,A_r,B,C,D$ form a partition of $N(Q)$.
\end{claim}
\begin{claimproof}
    Clearly the sets are pairwise disjoint. We need to show that any $w \in N(Q)$ belongs to one of sets $A_1,A_r,B,C,D$.
    
    Let $\ell$ be the largest integer such that $w$ has $\ell$ consecutive neighbors on $Q$ (again if $Q$ is a cycle, then we compute indices modulo $r$; in particular, we treat $v_1$ and $v_r$ as consecutive).
    If $\ell = r$, i.e., $N_Q(w) = Q$, then $w \in D$.
    So assume otherwise and let a longest sequence of consecutive neighbors of $w$ on $Q$ be $v_i, \ldots, v_{i+\ell-1}$. We will consider several cases.    
        
    Suppose first that $\ell \geq 4$. Then by symmetry we assume $v_{i-1}w \notin E(G)$ and the set of the vertices $\{v_{i-1}, v_i, v_{i+1}, w, v_{i+3}\}$ induces the bull (see \cref{pic:claim_partition_a}).  

    Suppose now $\ell = 2$. Then $w$ has two consecutive neighbors $v_i, v_{i+1}$ on $Q$.
    If $Q$ is a cycle or $Q$ is a path but neither $v_i$ nor $v_{i+1}$ are its endvertices, then the set of vertices $\{v_{i-1}, v_i, w, v_{i+1}, v_{i+2}\}$ induces the bull (see \cref{pic:claim_partition_b}).
    Thus, suppose that $Q$ is a path with $v_i$ being its endvertex (the case if $v_{i+1}$ is an endvertex is symmetric).
    Then either $N_Q(w)=\{v_1, v_2\}$ and thus $w \in A_1$, or $w$ has some neighbors among $\{v_4,\ldots,v_r\}$.
    If there is a neighbor $v_k$ of $w$, where $k > 4$, then the set of vertices $\{v_3, v_2, v_1, w, v_k\}$ induces the bull (see \cref{pic:claim_partition_c}).
    Otherwise, the only neighbors of $w$ in $\{v_4,\ldots,v_r\}$ is $v_4$ and the set of vertices $\{v_1, w, v_4, v_5, v_6, v_3\}$ induces $\E$ (see \cref{pic:claim_partition_d}).

    \begin{figure}[t]
    \centering    
    \begin{subfigure}{0.22\textwidth}
    \centering  
    \includegraphics[width=1.1\linewidth]{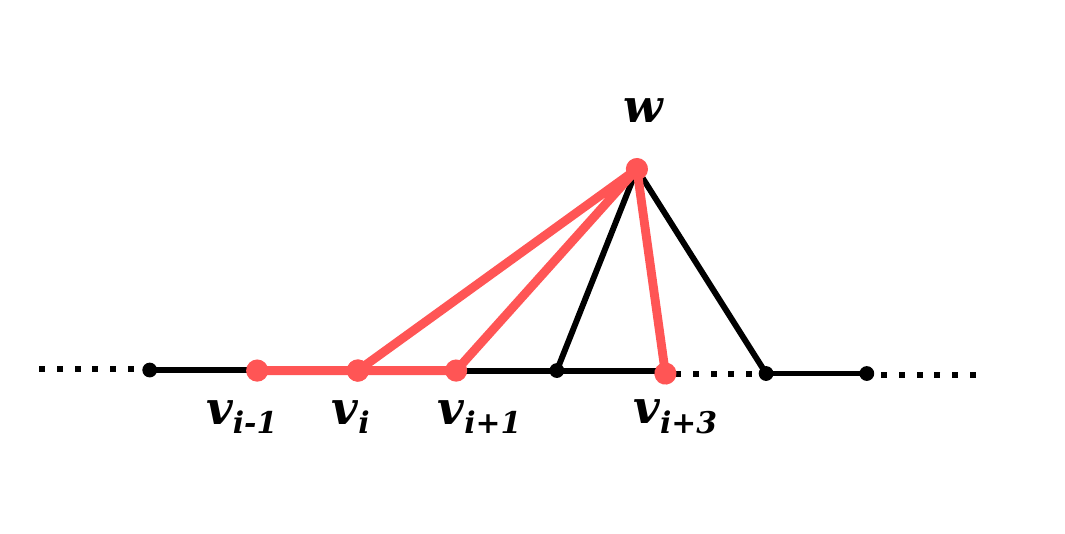} 
    \caption{\label{pic:claim_partition_a}}
    \end{subfigure}
    \begin{subfigure}{0.22\textwidth}
    \centering  
    \includegraphics[width=1.1\linewidth]{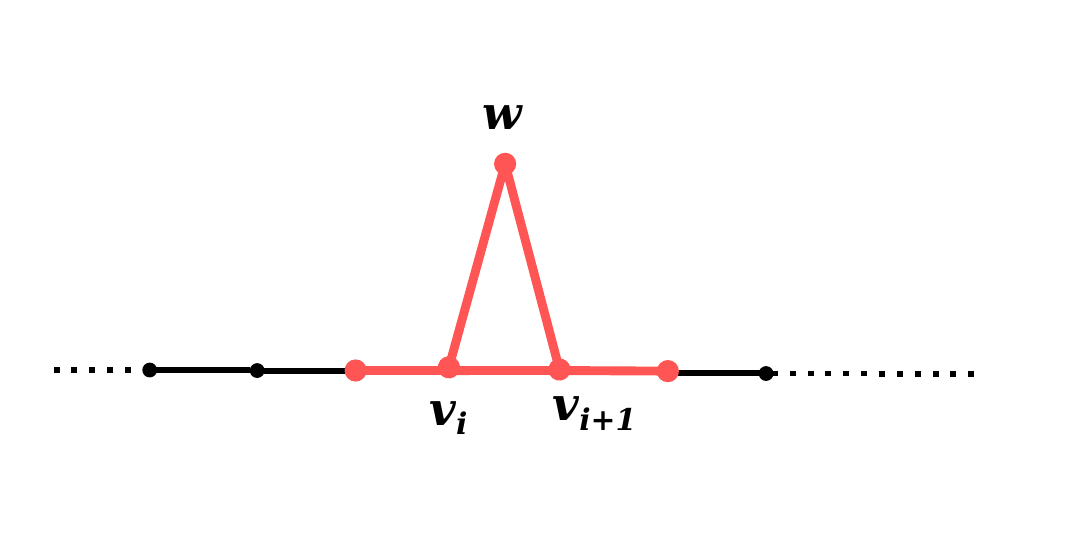}
    \caption{\label{pic:claim_partition_b}}
    \end{subfigure}
    \begin{subfigure}{0.22\textwidth}
    \centering  
    \includegraphics[width=1.1\linewidth]{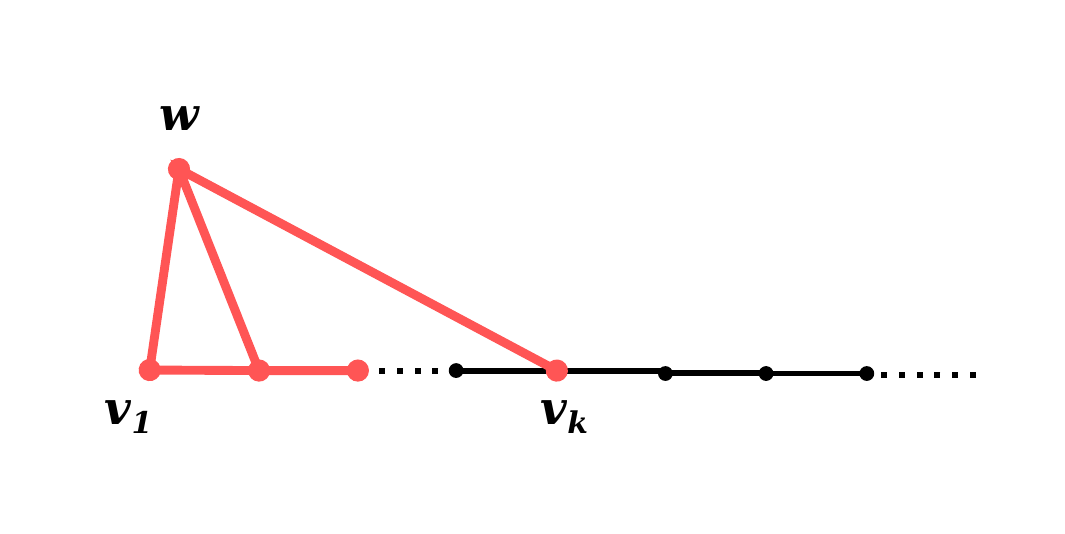}
    \caption{\label{pic:claim_partition_c}}    
    \end{subfigure}
    \begin{subfigure}{0.22\textwidth}
    \centering  
    \includegraphics[width=1.1\linewidth]{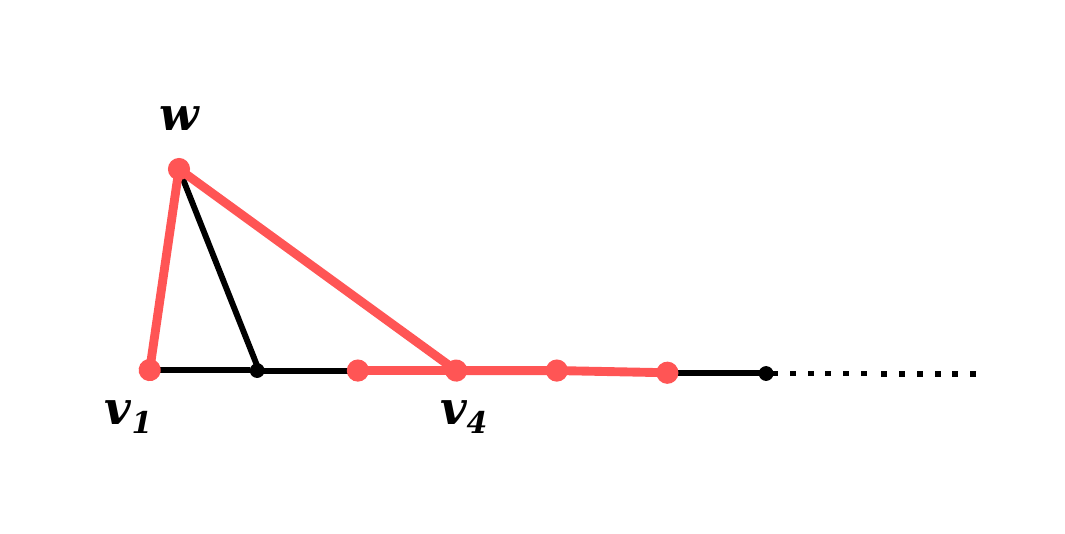}
    \caption{\label{pic:claim_partition_d}}    
    \end{subfigure}
    \caption{Induced subgraphs constructed in the proof of \cref{clm:NQ}, the case that $\ell>1$.}\label{pic:claim_partition1}
\end{figure}

    Next, suppose that $\ell =1$, i.e., $w$ has no two consecutive neighbors on $Q$.
    Let us consider possible cases.
    Suppose first that $w$ has (at least) three neighbors $v_{a}, v_{b}, v_{c}$ on $Q$, where $a < b -1< c -2$, such that one of the following cases applies:
    \begin{enumerate}[(i)]
        \item $Q$ is a cycle, or
        \item $Q$ is a path and neither $v_{a}$ nor $v_{c}$ is an endvertex of $Q$, or
        \item $Q$ is a path, one of $v_a,v_c$ is its endvertex, and either $b \geq a+2$ or $c \geq c+2$.
    \end{enumerate} 
    Then in the set $\{v_{a-1}, v_{a+1} v_{b-1}, v_{b+1}, v_{c-1}, v_{c+1}\}$ we can always find two vertices extending the induced star $\{v_a, v_b, v_c, w\}$ to $\E$ (here we slightly abuse the notation, as some of the vertices of that set might not exist if $Q$ is a path, but the two that form an induced copy of $\E$ always exist; see \cref{pic:claim_partition2_a}).
    Note that one of cases (i), (ii), (iii) applies whenever $w$ has at least four neighbors on $Q$.
    If $w$ has exactly three neighbors $v_a,v_b,v_c$ on~$Q$, the only uncovered case is when $Q$ is a path and, by symmetry, $a=1$, $b=3$, and $c=5$.
    Then the set of vertices $\{v_{1}, w, v_5, v_4, v_6,v_7\}$ induces $\E$ (see \cref{pic:claim_partition2_b}). 
    
    Now suppose that $w$ has exactly two non-consecutive neighbors $v_i, v_j$ on $Q$, where $i < j-1$.
    If $j=i+2$ then $w\in B_{i+1}$.    
    Otherwise, one of sets $\{v_{i-2}, v_{i-1}, v_i, v_{i+1}, v_{i+2}, w \}$, $\{v_{j-2}, v_{j-1}, v_j, v_{j+1}, v_{j+2}, w \}$ (see \cref{pic:claim_partition2_c}), $\{v_{i+2}, v_{i+1}, v_i, w, v_j, v_{i-1}\}$, $\{v_{j-2}, v_{j-1}, v_j, w, v_i,v_{j+1}\}$ (see \cref{pic:claim_partition2_d}) induces~$\E$ . (Let us recall that both $v_i$, $v_j$ cannot be endvertices of the path.)  

    \begin{figure}[t]
    \centering    
    \begin{subfigure}{0.22\textwidth}
    \centering  
    \includegraphics[width=1.1\linewidth]{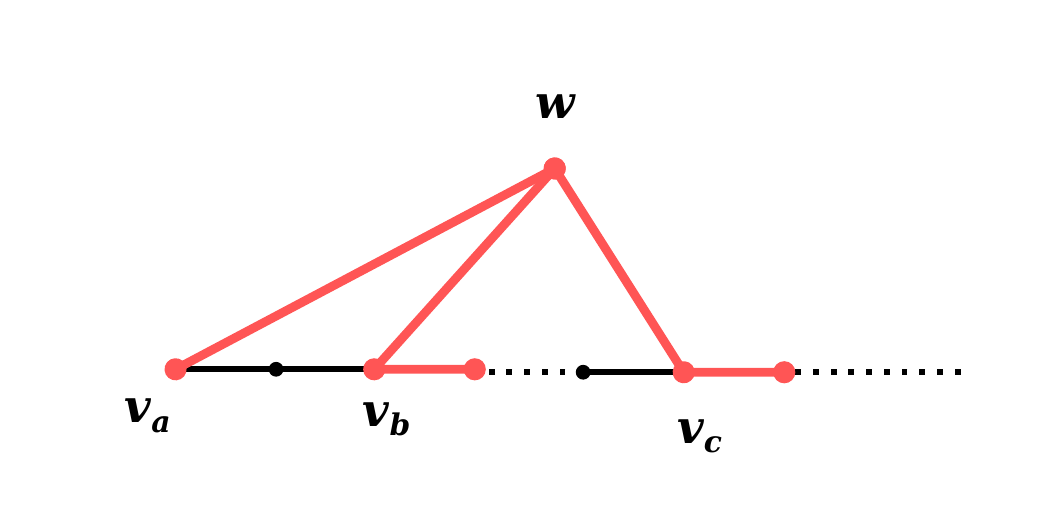} 
    \caption{\label{pic:claim_partition2_a}}
    \end{subfigure}
    \begin{subfigure}{0.22\textwidth}
    \centering  
    \includegraphics[width=1.1\linewidth]{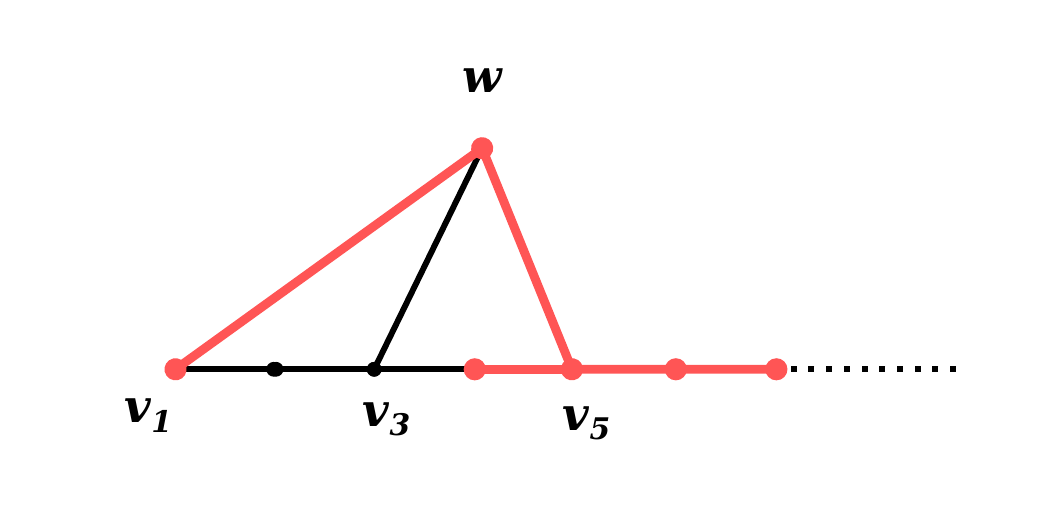}
    \caption{\label{pic:claim_partition2_b}}
    \end{subfigure}
    \begin{subfigure}{0.22\textwidth}
    \centering  
    \includegraphics[width=1.1\linewidth]{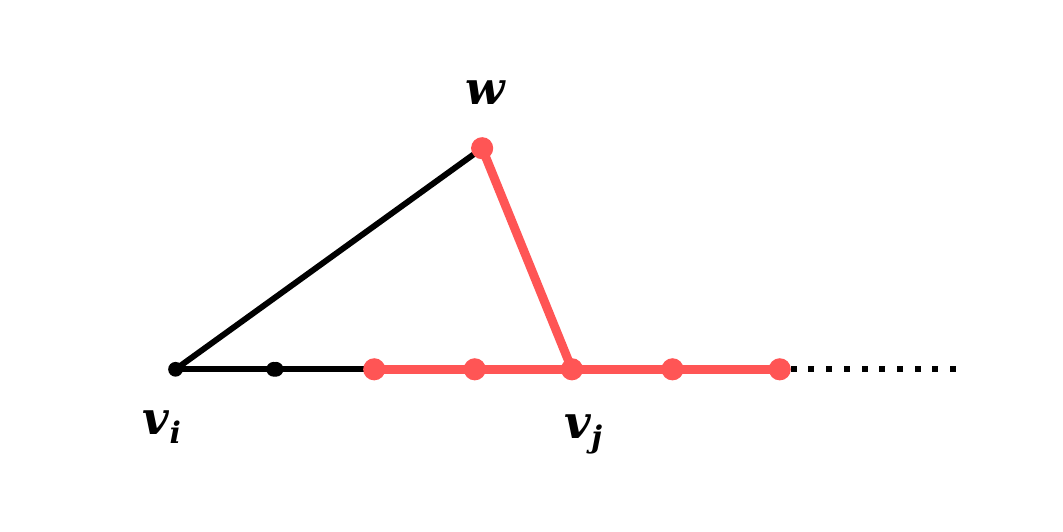}
    \caption{\label{pic:claim_partition2_c}}    
    \end{subfigure}
    \begin{subfigure}{0.22\textwidth}
    \centering  
    \includegraphics[width=1.1\linewidth]{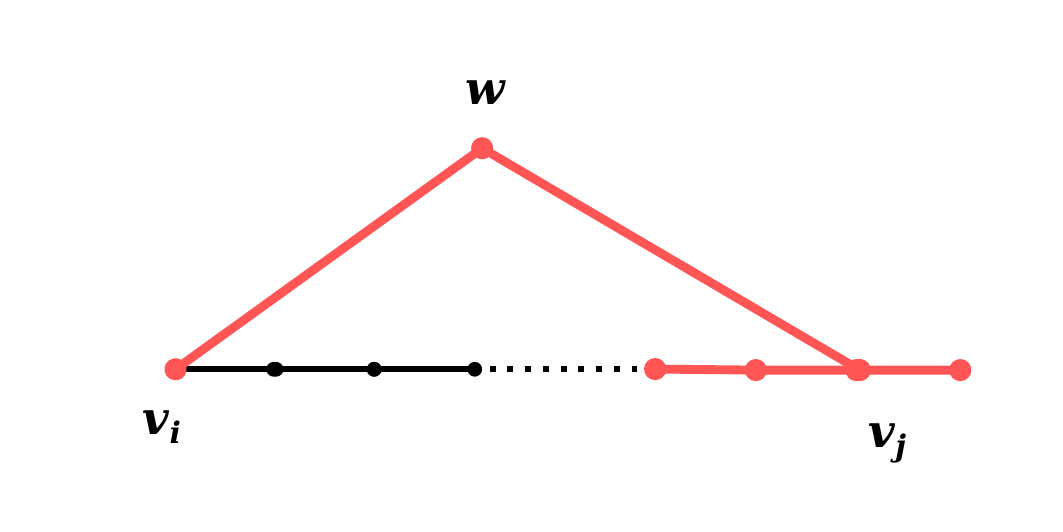}
    \caption{\label{pic:claim_partition2_d}}    
    \end{subfigure}
    \caption{Induced subgraphs constructed in the proof of \cref{clm:NQ}, the case that $\ell=1$.}\label{pic:claim_partition2}
\end{figure}
    
    Finally, if $w$ has only one neighbor $v_i$ on $Q$, then either $Q$ is a path and $i \in \{2, r-1\}$ and thus $w \in A_1 \cup A_r$, or the set of vertices $\{v_{i-2}, v_{i-1}, v_i, v_{i+1}, v_{i+2}, w\}$ induces $\E$.   

    We showed that either $\ell=r$ (so $w \in D$) or $\ell=2$ and $w \in A_1 \cup A_r$, or $\ell=1$ and $w \in A_1 \cup A_r \cup B$, or $\ell=3$.
    What still needs to be proven is that if $\ell=3$, then $w$ has exactly three neighbors on $Q$ and thus $w\in C$.  Suppose now that $\{v_i, v_{i+1}, v_{i+2}\} \subseteq N_Q$ and there is another $v_j$ such that $\left.wv_j \in E(G)\right.$ and $j \notin \{i-1, i+3\}$.
    By symmetry, assume that $j>i+3$.
    If $v_i$ is not an endvertex of the path $Q$ or $j>i+4$, then one of the sets $\{v_{i-1}, v_i, v_{i+1}, w, v_j\}$, $\{v_{i+3}, v_{i+2}, v_{i+1}, w, v_j\}$ induces the bull.
    Otherwise, if $Q$ is a path, $i=1$ and $j=5$, then the set of vertices $\{v_1, w, v_5, v_6, v_7, v_4\}$ induces $\E$.
\end{claimproof}

 For each $i\in\{2, \ldots, r\}$ if $Q$ is a path, or for each $i \in [r]$ if $Q$ is a cycle, we define $V_i=\{v_i\} \cup B_{i}\cup C_{i}$.
 If $Q$ is a path, we additionally define $V_1=\{v_1\}\cup A_1$ and $V_r=\{v_r\}\cup A_r$.

\begin{claim}\label{clm:fatpath}
    $G[V_1 \cup \ldots \cup V_r]$ is a fat path or a fat cycle with consecutive sets $V_1,\ldots,V_{r}$.
\end{claim}
    \begin{claimproof}
    We need to show that for $i \neq j$ the sets $V_i, V_j$ are complete to each other if $j=i+1$ (or $j=i+1 \pmod r$ if $Q$ is a cycle) and anticomplete otherwise.

    Assume $i < j$ and pick any $w_i \in V_i$ and $w_j \in V_j$.
    If $w_i = v_i$ or $w_j = w_j$ then the claim is clear by the definition of sets  $V_i,V_j$.
    So suppose otherwise.

    Suppose first that $j = i+1$ and $w_iw_j \notin E(G)$.
    Either $i \geq 4$ or $j \leq r-3$; by symmetry assume the former.
    If $w_j \in B_i$, then $\{v_{j+3},v_{j+2},v_{j+1},v_j,w_i,w_j\}$ induces $\E$.
    If $w_j \in C_i$, then $\{v_{j+2},v_{j+1},w_j, v_j, w_i\}$ induces the bull (see \cref{pic:claim_fatcycle_a} and \cref{pic:claim_fatcycle_b}).

    Now assume that $j > i+1$ and $w_iw_j \in E(G)$. Suppose first that $w_j \in B_j$. 

    If $j > i+3$, then $\{w_i, w_j, v_{j-1}, v_{j-2}, v_{j-3}, v_j\}$ induces $\E$. If $j = i+3$ or $j=i+2$, then either $i \geq 4$ or $j \leq r-3$; assume the former by symmetry. Then $\{w_i,w_j,v_{j+1},v_{j+2},v_{j+3}, v_j\}$ induces $\E$ (see \cref{pic:claim_fatcycle_c}).

    Suppose now $w_j \in C_j$. If $j>i+3$, then $\{w_i, w_j, v_j, v_{j-1}, v_{j-2}\}$ induces the bull (see \cref{pic:claim_fatcycle_d}). Otherwise assume by symmetry $j \leq r-3$. Then $\{w_i, w_j, v_j, v_{j+1}, v_{j+2}\}$ induces the bull. 
        This completes the proof of the claim.
    \end{claimproof}

    \begin{figure}[t]
    \centering    
    \begin{subfigure}{0.22\textwidth}
    \centering  
    \includegraphics[width=1.1\linewidth]{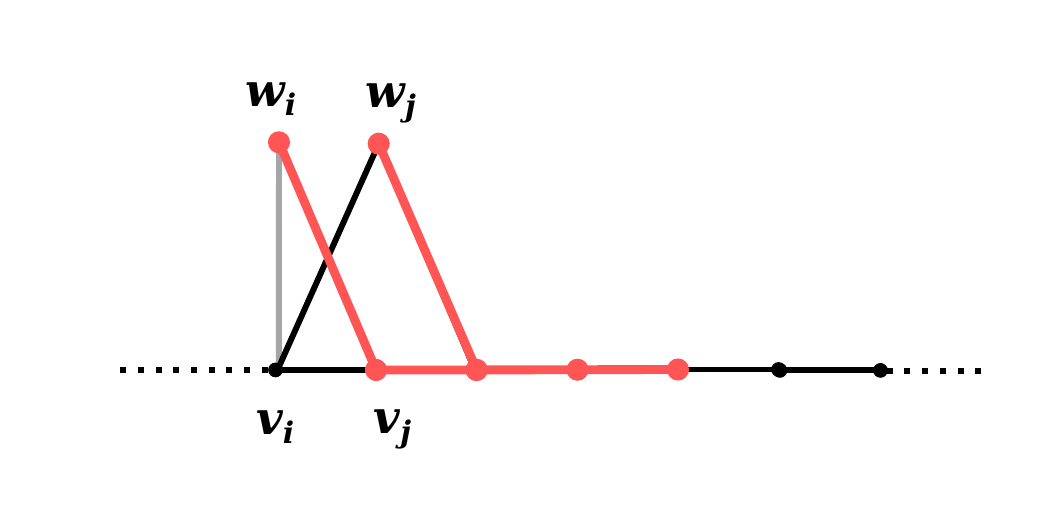} 
    \caption{\label{pic:claim_fatcycle_a}}
    \end{subfigure}
    \begin{subfigure}{0.22\textwidth}
    \centering  
    \includegraphics[width=1.1\linewidth]{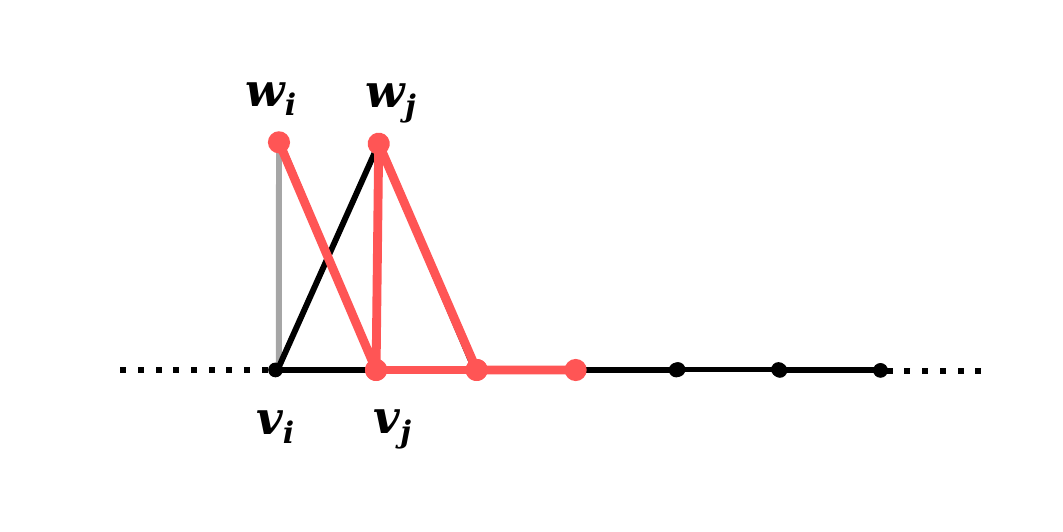}
    \caption{\label{pic:claim_fatcycle_b}}
    \end{subfigure}
    \begin{subfigure}{0.22\textwidth}
    \centering  
    \includegraphics[width=1.1\linewidth]{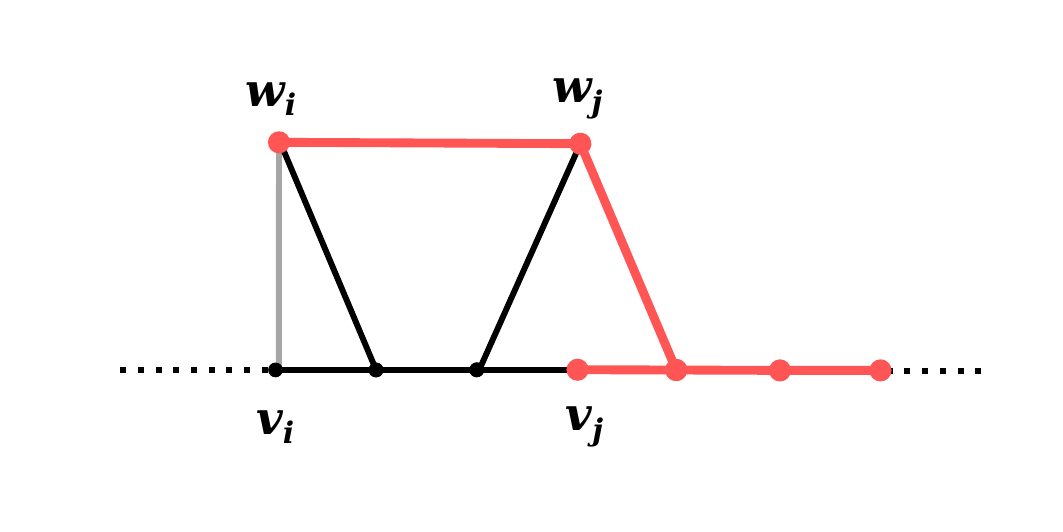}
    \caption{\label{pic:claim_fatcycle_c}}    
    \end{subfigure}
    \begin{subfigure}{0.22\textwidth}
    \centering  
    \includegraphics[width=1.1\linewidth]{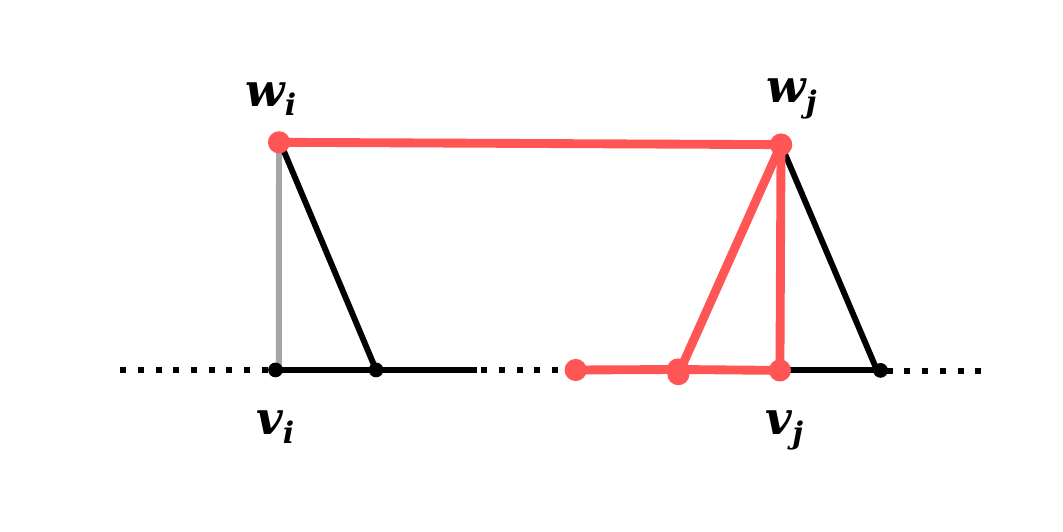}
    \caption{\label{pic:claim_fatcycle_d}}    
    \end{subfigure}
    \caption{Induced subgraphs constructed in the proof of \cref{clm:fatpath}.}\label{pic:claim_fatcycle}
\end{figure}

    So let $R=G[V_1 \cup \ldots \cup V_r]$.
    Now let us argue about the second statement of the lemma.

    \begin{claim}\label{clm:D_complete_to_R}
        $D$ is complete to $R$.        
    \end{claim}
    \begin{claimproof}
    By the definition, $D$ is complete to $Q$. Thus consider $w_i\in V_i\setminus\{v_i\}$ for some $i$ and $d \in D$. 
    Pick any $j \in [r] \setminus \{i-2,i-1,i,i+1,i+2\}$; it exists since $r \geq 8$.
    Note that if $w_id \notin E(G)$, then one of the sets $\{w_i,v_{i+1},v_{i+2},d,v_j\}$ or $\{w_i,v_{i-1},v_{i-2},d,v_j\}$ is well-defined and  induces the bull, a contradiction (see \cref{pic:claim_a}).
    \end{claimproof}

    \begin{claim}\label{clm:only_D_has_N2}
        There are no edges between $R$ and $V(G) \setminus (R \cup D)$.
    \end{claim}
    \begin{claimproof}
        For contradiction, suppose that there is $u \notin R \cup D$ and $w \in R$ that are adjacent.
        Note that $w \in V_i \setminus \{v_i\}$, for some $i \in \{1, \ldots, r\}$, as $u$ has no edges to $Q$ by \cref{clm:NQ}.
        Again we have several cases, but either one of sets $\{u,w,v_i,v_{i+1},v_{i+2}\}$, $\{u,w,v_i,v_{i-1},v_{i-2}\}$ induces the bull (if $wv_i \in E(G)$;  see \cref{pic:claim_b}), or
        one of sets $\{u,w,v_{i+1},v_i,v_{i+2},v_{i+3}\}$, $\{u,w,v_{i-1},v_i,v_{i-2},v_{i-3}\}$ (if $wv_i \notin E(G)$; see \cref{pic:claim_c}).        
    \end{claimproof}
        
    The third statement of the lemma is immediate, as $N[Q] \subset R \cup D$, and thus every component of $G- (R \cup D)$ is contained in one component of $G - N[Q]$. 
    Now let us show the stronger property obtained in the case that $G$ is $(\bull, \chair)$-free.
    
\begin{claim}\label{clm:bc_comps_compl_to_D}
     If $G$ is $(\bull, \chair)$-free, then every component of  $G - (R \cup D)$ is complete to some vertex of $D$.
\end{claim}
\begin{claimproof}
    Suppose otherwise and let $C$ be a component $C$ of $G - (R \cup D)$ which is not complete to any vertex of $D$. 
    Since $G$ is connected and $D$ separates $R$ and $G-(R \cup D)$, there exist two adjacent vertices $u, u' \in C$ such that $u$ is adjacent to some vertex $d \in D$ and $du' \notin E(G)$. But then the set of vertices $\{u', u, d, v_1, v_3\}$ induces the chair (see \cref{pic:claim_d}).
\end{claimproof}

    \begin{figure}[t]
    \centering    
    \begin{subfigure}{0.22\textwidth}
    \centering  
    \includegraphics[width=1.1\linewidth]{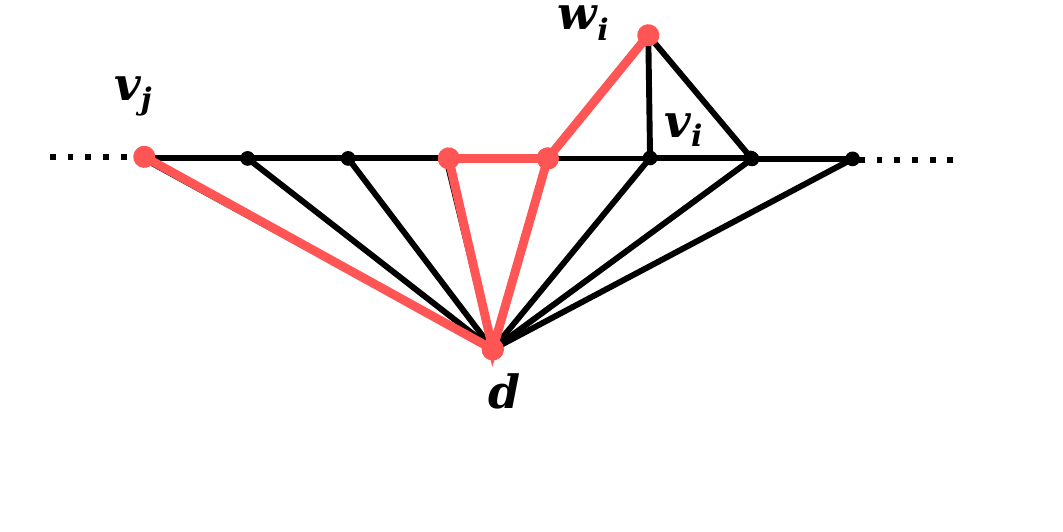} 
    \caption{\label{pic:claim_a}}
    \end{subfigure}
    \begin{subfigure}{0.22\textwidth}
    \centering  
    \includegraphics[width=1.1\linewidth]{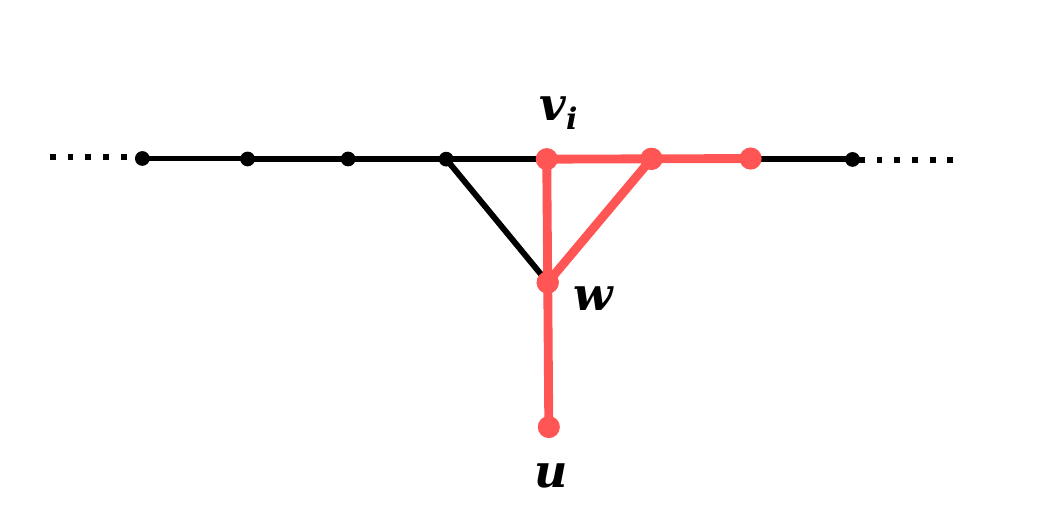}
    \caption{\label{pic:claim_b}}
    \end{subfigure}
    \begin{subfigure}{0.22\textwidth}
    \centering  
    \includegraphics[width=1.1\linewidth]{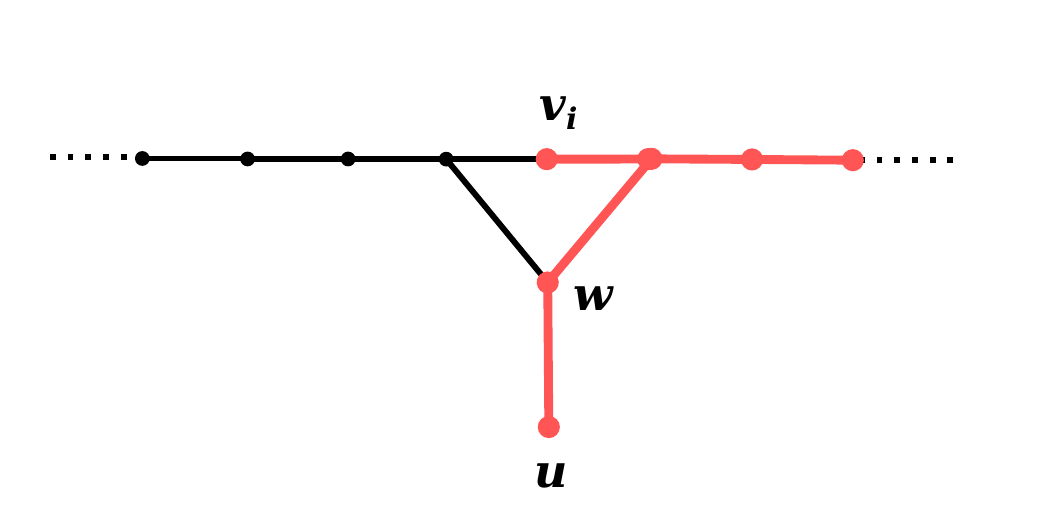}
    \caption{\label{pic:claim_c}}    
    \end{subfigure}
    \begin{subfigure}{0.22\textwidth}
    \centering  
    \includegraphics[width=1.1\linewidth]{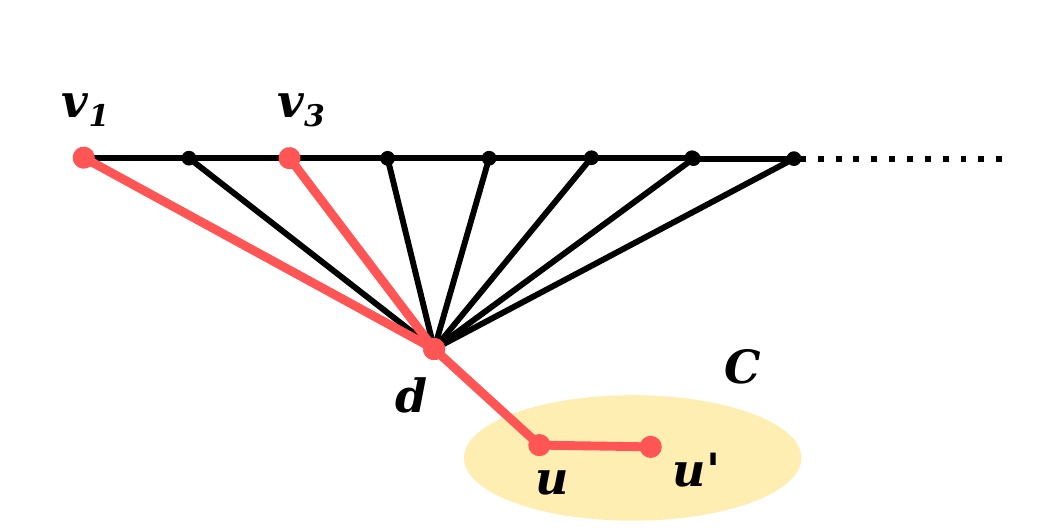}
    \caption{\label{pic:claim_d}}    
    \end{subfigure}
    \caption{Induced subgraphs constructed in proofs of  \cref{clm:D_complete_to_R,clm:only_D_has_N2,clm:bc_comps_compl_to_D}.}\label{pic:claim}
\end{figure}
    
    To complete the proof, we observe that the partition of $G$ into $R,D,T$ can easily be found in polynomial time.
\end{proof}

\begin{lemma}\label{lem:structureCh}
    Let $G$ be a $(\bull,\chair)$-free graph on at least three vertices, and let $P$ be a maximal induced path on at least 3 vertices.
    For every component $C$ of $G- N[P]$ there exists a vertex $v \in N(P)$ that is complete to $C$.    
\end{lemma}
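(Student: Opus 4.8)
The plan is to argue by contradiction, assuming that no vertex of $N(P)$ is complete to $C$, and to extract either a forbidden \bull{} or \chair{}, or a contradiction with the maximality of $P$. Write the path as $P=v_1v_2\cdots v_p$ with $p\ge 3$, and recall that every vertex of $C$ is anticomplete to $P$ by the definition of $C$ (a fact I would use throughout without comment). I would first fix a vertex $w\in N(P)$ having a neighbor in $C$; such a $w$ exists because $G$ may be assumed connected (the setting in which the lemma is applied). By the contradiction hypothesis $w$ is not complete to $C$, so $w$ has both a neighbor and a non-neighbor in $C$; since $C$ is connected, there is then an edge $cc'$ of $C$ with $cw\in E(G)$ and $c'w\notin E(G)$.

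The first key step is to show that $N_P(w)$ is a clique of $G[P]$, hence --- as $P$ is induced --- either a single vertex $\{v_i\}$ or two consecutive vertices $\{v_i,v_{i+1}\}$. Indeed, were $v_i,v_k\in N_P(w)$ non-adjacent, the set $\{c',c,w,v_i,v_k\}$ would induce a \chair: its center $w$ has the three pairwise non-adjacent neighbors $c,v_i,v_k$ (the pair $v_i,v_k$ by assumption, and $c$ since it is anticomplete to $P$), while $c'$ is a pendant attached to $c$ alone.

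It then remains to rule out the two surviving shapes of $N_P(w)$. If $N_P(w)=\{v_i,v_{i+1}\}$, then $wv_iv_{i+1}$ is a triangle, and since $p\ge 3$ at least one of $v_{i-1},v_{i+2}$ exists; taking $v_{i-1}$ (the other case being symmetric), the set $\{c,w,v_i,v_{i+1},v_{i-1}\}$ induces a \bull, with pendants $c$ at $w$ and $v_{i-1}$ at $v_i$ (here $wv_{i-1}\notin E(G)$ precisely because $v_{i-1}\notin N_P(w)$). If instead $N_P(w)=\{v_i\}$ with $v_i$ an internal vertex of $P$, then $\{c,w,v_i,v_{i-1},v_{i+1}\}$ induces a \chair{} centered at $v_i$, with pairwise non-adjacent neighbors $w,v_{i-1},v_{i+1}$ and pendant $c$ at $w$. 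Finally, if $N_P(w)=\{v_i\}$ with $v_i$ an endvertex, say $v_i=v_1$, then $w\notin V(P)$ is adjacent to $v_1$ and to no other vertex of $P$, so $wv_1v_2\cdots v_p$ is an induced path strictly longer than $P$, contradicting the maximality of $P$. As every case is contradictory, some vertex of $N(P)$ must be complete to $C$.

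I expect the only genuinely delicate point to be the reduction to $N_P(w)$ being a clique, since that single \chair{} argument is what collapses all the complicated possibilities for how $w$ can attach to $P$; after it, the case analysis is short and routine. It is worth emphasizing that the maximality of $P$ is used exactly once --- in the single-endvertex case --- and the hypothesis $p\ge 3$ is used exactly once, to supply the third path-vertex for the \bull{} in the two-vertex case; both are therefore essential to the statement.
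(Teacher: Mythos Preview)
Your proof is correct and follows essentially the same approach as the paper's: both argue by contradiction, fix $w\in N(P)$ together with an edge $cc'$ of $C$ with $wc\in E(G)$ and $wc'\notin E(G)$, and then case-split on $N_P(w)$ to exhibit a forbidden \bull{} or \chair{}. Your organization via the preliminary observation that $N_P(w)$ must be a clique of $P$ is a little cleaner than the paper's three-way split (all of $P$ / two consecutive / no consecutive), and you are more explicit about invoking the maximality of $P$ in the single-endvertex case, which the paper leaves implicit.
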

\begin{proof}
    Let us suppose that there exists a component $C$ of $G - N[P]$ such that $C$ is not complete to any vertex of $N[P]$. Let $uu'$ be an edge in $C$ such that $wu \in E(G)$ and $wu' \notin E(G)$ for some $w \in N[P]$. Let us consider possible cases. 

    If $w$ is adjacent to every vertex of $P$, then the set of vertices $\{u', u, w, v_1, v_3\}$ induces the chair. If $w$ is adjacent to two consecutive vertices on $P$, but not to the whole $P$, we can always find three consecutive vertices $v_i, v_{i+1}, v_{i+2}$ such that $wv_{i+1}\in E(G)$ and exactly one of $v_i, v_{i+2}$ is a neighbor of $w$. Then the set of vertices $\{v_i, v_{i+1}, v_{i+2}, w, u\}$
    induces the bull. Finally, if $w$ has no consecutive neighbors on $P$, then for its any neighbor $v_k$ the set of vertices $\{v_{k-1}, v_k, v_{k+1}, w, u\}$ induces the chair. 
\end{proof}

\begin{lemma}\label{lem:minimal}
    Let $G$ be an $(\bull,\E)$-free graph, $X$ be a subset of $V(G)$, and $c$ be a proper $k$-coloring of $G[X]$.
    Let $S,T$ be disjoint subsets of $V(G)$, and let $v \in V(G) \setminus (S \cup T)$,
    which is complete to $S$ and anticomplete to $T$.
    For $i \in [k]$, let $S^i$ be an inclusion-wise minimal subset of $S \cap X \cap c^{-1}(i)$,
    such that every vertex from $T$ that has a neighbor $S \cap X \cap c^{-1}(i)$, has a neighbor in $S'$.
    Then $|S^i| \leq 2$.
\end{lemma}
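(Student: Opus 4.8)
The plan is to fix $i$, abbreviate $W := S \cap X \cap c^{-1}(i)$ for the set of colour-$i$ vertices of $S\cap X$, and argue by contradiction: assuming $|S^i| \geq 3$, I will exhibit an induced bull or an induced $\E$. Two preliminary observations set the stage. First, since $c$ is a proper colouring of $G[X]$ and every vertex of $W$ has colour $i$, the set $W$ (and hence $S^i \subseteq W$) is independent. Second, as $S^i \subseteq S$, the vertex $v$ is complete to $S^i$, while $v$ is anticomplete to $T$ by hypothesis.

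The key tool is the minimality of $S^i$, which supplies \emph{private neighbours}: for each $s \in S^i$, the set $S^i \setminus \{s\}$ fails the domination property, so there is a vertex $t_s \in T$ adjacent to $s$ but to no other vertex of $S^i$. Distinct $s$'s necessarily receive distinct private neighbours. I would then select three vertices $s_1,s_2,s_3 \in S^i$ together with private neighbours $t_1,t_2,t_3 \in T$. The relevant adjacencies are now fully determined on $\{v,s_1,s_2,s_3,t_1,t_2,t_3\}$ except for the edges inside $\{t_1,t_2,t_3\}$: the $s_j$ are pairwise non-adjacent (independence of $W$) and each is complete to $v$; we have $t_j v \notin E(G)$ for each $j$ (as $t_j \in T$); and $s_j t_j \in E(G)$ while $s_j t_{j'} \notin E(G)$ for $j \neq j'$ (privateness).

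The argument then splits on the edges inside $\{t_1,t_2,t_3\}$. If some pair is non-adjacent, say $t_1 t_2 \notin E(G)$, then $t_1 - s_1 - v - s_2 - t_2$ is an induced $P_5$ (every required non-edge holds by the adjacencies listed above), and $s_3$ — adjacent to the centre $v$ and to nothing else in this $P_5$ — extends it to an induced $\E$ on $\{v,s_1,s_2,s_3,t_1,t_2\}$. If instead all three pairs are adjacent, so $\{t_1,t_2,t_3\}$ is a triangle, then $\{t_1,t_2,t_3,s_1,s_2\}$ induces the bull: the triangle $t_1t_2t_3$ carries the horn $s_1$ on $t_1$ and the horn $s_2$ on $t_2$, and privateness together with independence of $W$ guarantees that $s_1,s_2$ have no further neighbours in the set. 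Either outcome contradicts $(\bull,\E)$-freeness, whence $|S^i| \leq 2$.

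I expect the only real care to lie in the bookkeeping: verifying that the private neighbours exist and are pairwise distinct (both direct consequences of minimality), and checking that every non-edge demanded by $\E$ and by the bull is genuinely present among these seven vertices. This is precisely where both forbidden subgraphs are consumed — the non-adjacent case needs $\E$ while the triangle case needs the bull — which explains why the hypothesis is $(\bull,\E)$-freeness rather than just one of them. No individual step is hard; the whole content is the clean case distinction on the triangle $\{t_1,t_2,t_3\}$.
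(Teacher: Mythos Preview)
Your proof is correct and matches the paper's own argument essentially line for line: the paper also assumes three vertices $u_1,u_2,u_3\in S^i$ with private neighbours $w_1,w_2,w_3\in T$, finds the bull on $\{u_1,w_1,w_2,w_3,u_3\}$ when the $w_j$ form a triangle, and otherwise builds $\E$ on $\{w_1,u_1,v,u_3,w_3,u_2\}$ from a non-adjacent pair $w_1,w_3$. The only differences are notation and which indices are selected, so there is nothing further to compare.
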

\begin{proof}
    Let us suppose that for some color $i$ the minimal subset $S^i$ satisfying the claim contains at least three vertices $u_1, u_2, u_3$. Since every vertex of the set $S^i$ is colored with $i$ by proper coloring, no two of $u_1, u_2, u_3$ are adjacent. From the minimality of the set $S^i$, every vertex $u_i$ has a private neighbor $w_i$ in $T$. If there exist every edge between $w_1, w_2, w_3$, then the set of vertices $\{u_1, w_1, w_2, w_3, u_3\}$ induces the bull (see \cref{pic:minsub_a}). Otherwise assume without losing generality $w_1, w_3$ are not adjacent. Then the set of vertices $\{w_1, u_1, v, u_3, w_3, u_2\}$ induces $\E$ (see \cref{pic:minsub_b}).

\begin{figure}[b]
    \centering    
    \begin{subfigure}{0.45\textwidth}
    \centering  
    \includegraphics[width=0.8\linewidth]{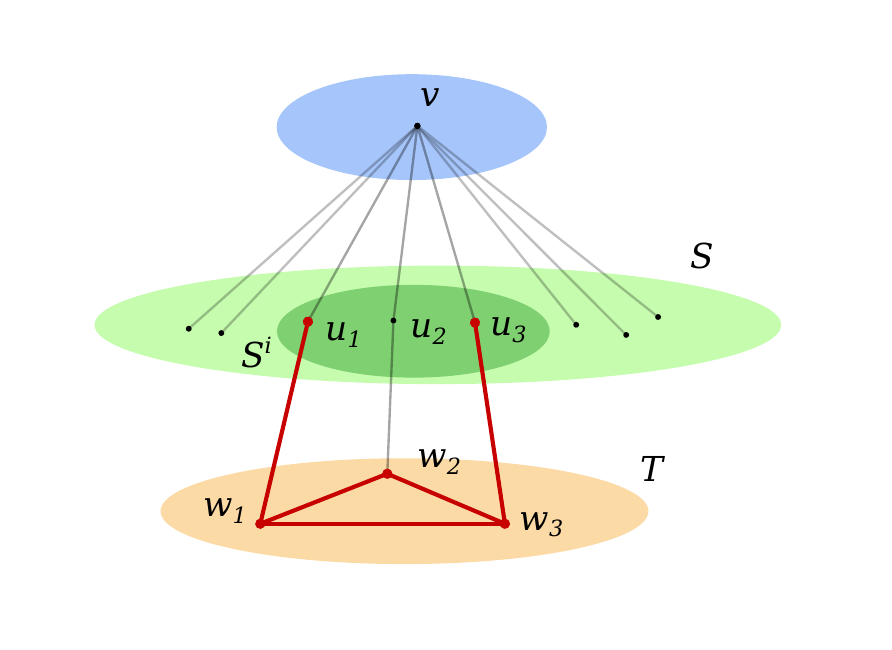}
    \caption{\label{pic:minsub_a}}
    \end{subfigure}
    \begin{subfigure}{0.45\textwidth}
    \centering  
    \includegraphics[width=0.8\linewidth]{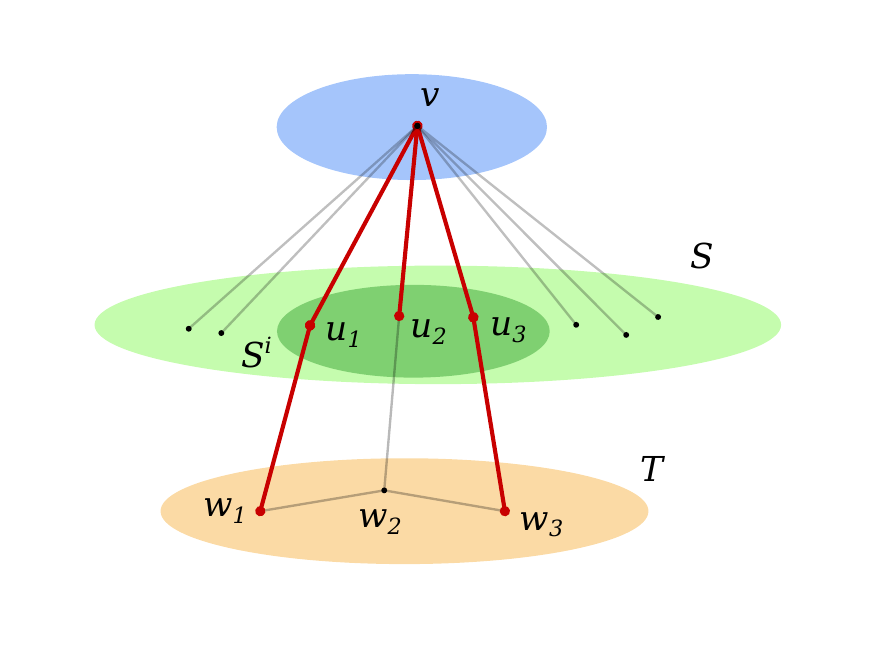} 
    \caption{\label{pic:minsub_b}}
    \end{subfigure}
    \caption{Induced subgraphs constructed in the proof of \cref{lem:minimal}.}\label{pic:minsub}
\end{figure}
\end{proof}

\subsection{Dealing with fat paths and fat cycles}

\begin{lemma}\label{lem:fatpath} 
    Let $k \geq 1$ and let $(G,\rev)$ be an instance of  \textsc{Max Partial $k$-coloring}, where $G$ is a fat path or a fat cycle with consecutive sets $V_1,\ldots,V_r$.
    Suppose that for each $i \in [r]$, any subinstance on vertex set $V_i$ can be solved in time $\lambda$.
    Then $(G,\rev)$ can be solved in time $\Oh(6^k \cdot \lambda \cdot r)$.
\end{lemma}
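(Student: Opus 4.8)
The plan is to exploit the defining structure of a fat path / fat cycle: two consecutive bags $V_i, V_{i+1}$ are complete to each other, so in any proper coloring the set of colors appearing on $X \cap V_i$ is disjoint from the set appearing on $X \cap V_{i+1}$. This decouples the bags entirely, apart from the single constraint that adjacent color sets are disjoint. Concretely, for each $i \in [r]$ and each $A \subseteq [k]$ I would define $f_i(A)$ to be the optimum revenue of the subinstance on $V_i$ in which every color outside $A$ is forbidden on every vertex of $V_i$. The key claim is then that the optimum of $(G,\rev)$ equals $\max \sum_i f_i(A_i)$, where the maximum ranges over all tuples $(A_1,\ldots,A_r)$ of subsets of $[k]$ satisfying $A_i \cap A_{i+1} = \emptyset$ for every consecutive pair (including the wrap-around pair $A_r, A_1$ in the cycle case).

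The two inequalities behind this claim are gluing arguments. Given an optimal solution $(X,c)$, setting $A_i := c(X \cap V_i)$ yields a valid tuple, since adjacent bags use disjoint colors by completeness; and the restriction of $(X,c)$ to $V_i$ witnesses that the revenue collected in $V_i$ is at most $f_i(A_i)$, so $\mathrm{OPT} \leq \max \sum_i f_i(A_i)$. Conversely, given a valid tuple I take an optimal restricted solution in each bag and glue: within a bag the coloring is proper, adjacent bags use disjoint colors by disjointness of the $A_i$, and non-adjacent bags are anticomplete, so the union is a proper coloring of total revenue $\sum_i f_i(A_i)$, giving the reverse inequality.

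Computing the table $f_i(\cdot)$ follows directly from the hypothesis: forbidding the colors outside $A$ on $V_i$ produces a subinstance on $V_i$, which by assumption is solved in time $\lambda$; doing this for all $2^k$ sets $A$ and all $r$ bags costs $\Oh(2^k \lambda r)$. It then remains to maximize $\sum_i f_i(A_i)$ over valid tuples. For a fat path this is a left-to-right dynamic program with states indexed by $A \subseteq [k]$: writing $D_i(A)$ for the best value of $f_1(A_1)+\cdots+f_i(A_i)$ over valid prefixes ending with $A_i = A$, the recurrence is $D_i(A) = f_i(A) + \max_{A' : A' \cap A = \emptyset} D_{i-1}(A')$, and the answer is $\max_A D_r(A)$. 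Since the number of disjoint ordered pairs $(A,A')$ is $\sum_A 2^{k-|A|} = 3^k$, each layer costs $\Oh(3^k)$ and the whole path DP costs $\Oh(3^k r)$.

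The fat cycle is handled by breaking the wrap-around constraint: I guess the color set $A_1 = B$ used on $V_1$ (there are $2^k$ choices) and run the path DP on $V_2,\ldots,V_r$ with the two boundary bags forced to avoid $B$, i.e. initializing only with $A_2$ disjoint from $B$ and reading off only those $A_r$ disjoint from $B$. Each guess costs $\Oh(3^k r)$, so the cycle case costs $\Oh(2^k \cdot 3^k \cdot r) = \Oh(6^k r)$; combining with the precomputation of the $f_i$ and using $\lambda \geq 1$ yields the claimed $\Oh(6^k \lambda r)$. The genuinely delicate points are the correctness of the reformulation — in particular that working with the monotone quantity $f_i(A)$ (``colors drawn from $A$'') rather than the exact color set is sound, which is precisely what the gluing argument secures — and the bookkeeping of the cycle's wrap-around constraint so that the $2^k \times 3^k$ accounting gives exactly the stated bound.
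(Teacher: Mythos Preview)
Your proposal is correct and follows essentially the same approach as the paper: precompute the per-bag optima $f_i(A)$ (the paper's $\textsf{OPT}[i,A]$) in time $\Oh(2^k\lambda r)$, then run a prefix DP over disjoint consecutive color sets (the paper's $\textsf{PRE}[i,B]$) costing $\Oh(3^k r)$ for the path and, after guessing the color set on $V_1$, $\Oh(6^k r)$ for the cycle. Your gluing/correctness argument is somewhat more explicit than the paper's, but the algorithm and analysis are the same.
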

\begin{proof}
    The algorithm is a straightforward dynamic programming. We describe how to compute the value of an optimum solution; the solution itself can be extracted in standard way.
    
    First, for each $i \in [r]$ and each $B \subseteq [k]$,
    we want to compute $\textsf{OPT}[i,B]$, which is the value of an optimum solution on $G[V_i]$ that does not use colors from $[k] \setminus B$.
    In order to compute $\textsf{OPT}[i,B]$, we call the assumed algorithm for the subinstance $(G[V_i],\rev_B)$, where $\rev_B$ is obtained from $\rev$ by restricting its domain to $V_i$ and forbidding every color \emph{not} in $B$ to every vertex.
    Thus, all values of $\textsf{OPT}[\cdot,\cdot]$  can be computed in total time $\Oh(2^k \cdot r \cdot \lambda)$.

    Now, for each $i \in [r]$ and each $B \subseteq [k]$,
    we want to compute $\textsf{PRE}[i,B]$, which is the value of an optimum solution on $G[V_1 \cup \ldots \cup V_i]$ such that colors from $[k] \setminus B$ do not appear on $V_i$.
    Clearly the optimum solution is the maximum entry $\textsf{PRE}[r,B]$, over all sets $B$.

    Let us first consider the case that $G$ is a fat path.
    Clearly $\textsf{PRE}[1,B] = \textsf{OPT}[1,B]$ for every $B$.
    Now, having computed all values $\textsf{PRE}[i-1,\cdot]$,
    we set
    \[
        \textsf{PRE}[i,B] = \max_{B' \subseteq [k] \setminus B} \textsf{PRE}[i-1,B'] + \textsf{OPT}[i,B].
    \]
    Thus, computing all values of $\textsf{PRE}[\cdot,\cdot]$ takes total time $\Oh(3^k \cdot r)$.

    The case that $G$ is a fat cycle is very similar. We exhaustively guess the set $B_1$ of colors that are allowed to appear on $V_1$, which gives $2^k$ branches.
    We set $\textsf{OPT}[r,B]$ to 0 whenever $B$ intersects $B_1$, as such colors cannot appear on $V_r$.
    Then we proceed as in the case of a fat path.
    The complexity in this case is $\Oh(2^k \cdot 3^k \cdot r)=\Oh(6^k \cdot r)$.
    The total complexity is thus upper-bounded by $\Oh(6^k \cdot r \cdot \lambda)$.  
\end{proof}

Let us remark that another way of dealing with fat paths/cycles and similar structures would be to employ \emph{modular decomposition}~\cite{DBLP:conf/icalp/TedderCHP08}. However, this would require introducing some additional level of technical complication, in particular, considering a more general coloring problem.
Thus we decided to solve the problem by hand.

\section{Proof of \cref{thm:chair,thm:E}}
For both results we use \emph{exactly the same} algorithm, the only difference is in the complexity analysis.

Let $(G,\rev)$ be the input instance of \textsc{Max Partial $k$-coloring}, where $G$ is $(\bull,\E)$-free, has $n$ vertices and clique number at most $\omega$.
If $G$ is disconnected, we consider each component separately, so assume otherwise.
The proof proceeds by the induction on $n + \omega$. Note that $n \geq \omega$.

One base case is $\omega =1$, i.e., $G$ is a one-vertex graph -- this case is trivial.
Furthermore, we will assume that $n$ is larger than some constant $n_0$ that follows implicitly from the complexity analysis given later in the proof.
Of course if $n$ is at most $n_0$, we can solve the problem using brute force.
Thus suppose that $\omega \geq 2$ and $n$ is sufficiently large. In particular, $n > \omega$, so $G$ is not a clique.
We inductively assume that we have an algorithm that solves every instance with $n' \leq n$ vertices and clique number  $\omega' \leq \omega$, where $n' + \omega' < n + \omega$.

Fix some (unknown) optimum solution $(X,c)$. In the algorithm we will perform of series of exhaustive guesses and branch into corresponding options. We will ensure that always (at least) one branch is ``correct,'' i.e., it corresponds to the structure of $(X,c)$. This way we are guaranteed that our algorithm indeed finds an optimum solution (though not necessarily $(X,c)$).

\paragraph{Building the Gy\'arf\'as path.}
Since $G$ is not a clique, there is a vertex $v$ such that $N[v] \neq V(G)$.
We start with calling \cref{thm:gyarfas} for $G$ and $v$; let $P$ be the returned path.
We greedily extend $P$ (keeping it an induced path): if there is a vertex $v$ in $G - P$ whose neighborhood in $P$ is exactly one vertex, which is an endvertex of~$P$, then we append $v$ to $P$.
We perform this step exhaustively; let us keep denoting the final path by $P$.
Note that since $N[v] \neq V(G)$, $P$ has at least three vertices.
Furthermore, we observe that is still has the property from \cref{thm:gyarfas}, i.e., every component of $G - N[P]$ has at most $n/2$ vertices.

We consider two cases, depending on the number of vertices of $P$.

\subsection*{Case A: $P$ has at most 6 vertices.}
Denote the consecutive vertices of $P$ by $x_1,\ldots,x_p$, where $p \leq 6$.
For $j \in [p]$, let $A_j$ be the set of vertices that are adjacent to $x_j$, but not to $x_{j'}$ for any $j' < j$.
Note that sets $A_1,\ldots,A_p$ form a partition of $N(P)$.
Let $A_{p+1}$ denote the set $V(G) \setminus N[P]$. We write $A_{> j}$ as a shorthand for $\bigcup_{j' >j}A_j$.

For each color $i \in [k]$ and each $j \in [p]$,
let $S_j^i$ be an inclusion-wise minimal subset of $X \cap A_j \cap c^{-1}(i)$
with the property that $N(X \cap A_j \cap c^{-1}(i)) \cap A_{>j} = N(S_j^i) \cap A_{>j}$.
As $x_j$ is complete to $A_j$ and anticomplete to $A_{>j}$, by \cref{lem:minimal} we observe that $|S_j^i|\leq 2$.

\paragraph{Separating the sets $A_j$ from each other.}

We exhaustively guess $X \cap P$, and $c|_{X \cap P}$, and sets $S_j^i$, for all $i \in [k]$ and $j \in [p]$.
This results in at most $(k+1)^6 \cdot n^{2pk} \leq n^{13k}$ branches (here we use that $n$ is large).
Note that we only consider branches that do not yield immediate contradiction.
In particular, for each $i \in [k]$, the set $\bigcup_{j \in [p]} S_j^i$ should be independent, should not have edges to vertices from $P \cap X$ colored $i$, and every $v \in \bigcup_{j \in [p]} S_j^i$ should satisfy $\rev(v,i) >0$.
Consider one such branch.

For each vertex $v \in P \cap X$, we forbid color $c(i)$ to any neighbor of $v$.
Similarly, for each vertex $v \in S^i_j$, we forbid color $i$ to any neighbor of $v$, and every color but $i$ to $v$.
Note that in the branch corresponding to correct guess this restriction is compatible the solution $(X,c)$.
Furthermore, any optimum solution of the current subinstance will contain every vertex from $S_j^i$, and its color will be $i$.
Now consider a vertex $v \in A_{>j}$ that is not adjacent to any vertex in $S_j^i$.
Note that this means that no neighbor of $v$ in $A_j$ is colored $i$ by $c$.
Thus, we forbid color $i$ to every vertex in $N(v) \cap A_j$.

\begin{claim}\label{clm:disjointlists}
    Let $uv \in E(G)$ such that $u \in A_j$ and $v \in A_{j'}$, where $j \neq j'$.
    The sets of colors with positive revenue for vertices $u$ and $v$ are disjoint.
\end{claim}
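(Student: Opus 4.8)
The plan is to prove \cref{clm:disjointlists} by contradiction, using the color-forbidding operations that the algorithm has just performed. Suppose there is a common color $i$ with positive revenue for both $u \in A_j$ and $v \in A_{j'}$, where without loss of generality $j < j'$. The key observation is that once the algorithm processes the set $S_j^i$, it forbids color $i$ to every vertex in $A_{>j}$ that has \emph{no} neighbor among $S_j^i$. Since $v \in A_{j'} \subseteq A_{>j}$ still has positive revenue for color $i$, the vertex $v$ must have a neighbor in $S_j^i$.

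Next I would exploit the defining property of $S_j^i$ together with the edge $uv$. The subtle point is that $u$ itself need not lie in $S_j^i$, so I cannot immediately conclude $uv$ gives the needed structure. Instead I would argue from the positive revenue of $u$ for color $i$. After the algorithm forbids color $i$ to the neighbors of each vertex in $S_j^i$, for $u$ to retain positive revenue for color $i$ it must be that $u$ is \emph{not} adjacent to any vertex of $S_j^i$ — unless $u$ itself belongs to $S_j^i$. So there are two subcases to handle: either $u \in S_j^i$, or $u \notin N(S_j^i)$.

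The main work is in deriving the contradiction in each subcase from the interaction with $v$. In both subcases $v$ has a neighbor $s \in S_j^i$ (as established above), and $v$ has positive revenue for $i$, meaning the algorithm did \emph{not} forbid $i$ to $v$ when processing the neighbors of $s$. But processing $s \in S_j^i$ forbids color $i$ to \emph{every} neighbor of $s$, so $v$ being a neighbor of $s$ with surviving revenue for $i$ is an immediate contradiction. This is the crux: the color-forbidding step applied to $S_j^i$ is precisely what rules out a common surviving color across the $A_j/A_{>j}$ boundary. I expect the main obstacle to be bookkeeping the order of the forbidding operations carefully — ensuring that the forbiddance triggered by $S_j^i$ is not later undone and that the ``positive revenue'' hypothesis refers to the revenue function \emph{after} all these modifications, so that the contradiction is genuine rather than an artifact of an earlier state of $\rev$.

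Once this contradiction is reached, the claim follows: no color can have positive revenue simultaneously for $u$ and $v$ whenever they lie in distinct parts $A_j, A_{j'}$ and are adjacent, so the surviving color sets of $u$ and $v$ are disjoint. I would close by remarking that this disjointness is exactly what decouples the parts $A_j$ across edges, which is the structural payoff used in the subsequent combination of partial solutions.
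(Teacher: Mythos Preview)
Your argument contains a misreading of the algorithm's forbidding step. You write that ``it forbids color $i$ to every vertex in $A_{>j}$ that has \emph{no} neighbor among $S_j^i$,'' but this is not what the algorithm does: for a vertex $w \in A_{>j}$ anticomplete to $S_j^i$, the algorithm forbids color $i$ to every vertex of $N(w) \cap A_j$, not to $w$ itself. Consequently your deduction that $v$ must have a neighbor in $S_j^i$ is exactly backwards. Indeed, if $v$ \emph{did} have a neighbor $s \in S_j^i$, then color $i$ would already have been forbidden to $v$ (as a neighbor of $s$), giving $\rev(v,i)=0$ --- a fact you yourself observe later, which is why your argument ends up circular rather than genuinely closing a contradiction.

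The correct dichotomy is the one in the paper: either $v$ is adjacent to some vertex of $S_j^i$, whence $\rev(v,i)=0$ by the first forbidding rule; or $v$ is anticomplete to $S_j^i$, in which case the second rule forbids color $i$ to every vertex of $N(v)\cap A_j$, and since $u \in N(v)\cap A_j$ we get $\rev(u,i)=0$. Either way $i$ cannot have positive revenue for both $u$ and $v$. Your subcase analysis on $u$ (whether $u \in S_j^i$ or $u \notin N(S_j^i)$) is never invoked and can be dropped; once the forbidding rule is stated correctly, the entire argument lives on the $v$ side.
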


\begin{claimproof}
    By symmetry assume that $j < j'$. Consider a color $i \in [k]$.
    If $v$ is adjacent to a vertex from $S^i_j$ (possibly this vertex is $u$), then the color $i$ was forbidden to $v$, so $\rev(v,i) = 0$.
    Otherwise, if $v$ is anticomplete to  $S^i_j$, then the color $i$ was forbidden to every neighbor of $v$ in $A_j$. In particular, $\rev(u,i)=0$.    
\end{claimproof}

\paragraph{Solving the problem on each set $A_j$.}

As a consequence, the subinstances induced by sets $A_j$ can be solved independently, and their corresponding optimum solutions can be combined into an optimum solution for $(G,\rev)$ (including also vertices from $P \cap X$ that we guessed before).
Thus we call the algorithm recursively for subinstances: $(G[A_i],\rev)$ for $i \in [p]$, and $(C,\rev)$, for every component $C$ of $G - N[P]$.
Note that each instance of $(G[A_i],\rev)$ has clique number at most $\omega-1$, as a clique of size $\omega$ in $G[A_i]$, together with $x_i$, would give a larger clique in $G$.

In the $(\bull,\chair)$-free case, i.e., in the setting of \cref{thm:chair}, the same holds for instances $(C,\rev)$, as ensured by \cref{lem:structureCh}.
Unfortunately, in the  $(\bull,\E)$-free case, i.e., in the setting of \cref{thm:E}, the clique number in each component $C$ does not have to decrease. However, the instance becomes simpler for another reason: by \cref{thm:gyarfas} and \cref{lem:structureE}, we know that each component $C$ has at most $n/2$ vertices.

As mentioned before, the algorithm returns the union of the solutions found by these recursive calls, and the previously guessed set $P \cap X$ with its coloring.
In the branch where each guess is correct, i.e., corresponds to the structure of $(X,c)$, the returned solution will be at least as good as $(X,c)$. Consequently, the algorithm returns an optimum solution.

This concludes the description of Case A.

\subsection*{Case B: $P$ has at least 7 vertices.}
Now let us proceed to the case that $P$ has at least 7 vertices.
We call \cref{lem:structureE} to obtain the partition of $V(G)$ into sets $R,D,T$.

First, we exhaustively guess the set $A \subseteq [k]$ of colors that appear on vertices from $X \cap R$;
this results in $2^k$ branches. Consider one such branch.
We forbid all colors not in $A$ to all vertices in $R$.
Moreover, as $D$ is complete to $R$ and thus no vertex from $D$ is colored (by $c$) with a color from $A$,
we forbid all colors from $A$ to all vertices of $D$.
Note that this allows us to split the problem into two independent subinstances: $(G[R],\rev)$ and $(G - R,\rev)$, whose corresponding optimum solution can be combined into an optimum solution of $(G,\rev)$. 

\paragraph{Solving the problem on $R$.}
Let $V_1,\ldots,V_r$ be the partition sets of $R$.
As for every $i$, the set $V_i$ is complete to $V_{i+1}$ and each of them is non-empty,
we notice that the clique number of each $G[V_i]$ is at most $\omega-1$.
Indeed, otherwise a maximum clique in $G[V_i]$, together with any vertex from $V_{i-1}$ or $V_{i+1}$ (at least one of these sets exists) forms a clique of size $\omega+1$ in $G$, a contradiction.
Thus, for each $i \in [r]$, any subinstance with vertex set $V_i$ can be solved using the algorithm from inductive assumption.
Consequently, we can use \cref{lem:fatpath} to solve the subinstance $(G[R],\rev)$.

\paragraph{Separating $D$ and $T$.}
Now we would like to analyze the set subinstance $(G - R,\rev)$ and reduce it to a number of independent, simpler subinstances. We proceed similarly as in Case A:
For each color $i$, let $D^i$ be an inclusion-wise minimal subset of $D \cap X \cap c^{-1}(i)$ such that
$N(D^i) \cap T = N(D \cap X \cap c^{-1}(i)) \cap T$.
Let $r$ be any vertex from $R$. Since $r$ is complete to $D$ and anticomplete to $T$, by \cref{lem:minimal} we observe that $|D^i| \leq 2$.
Thus, we exhaustively guess all sets $D^i$, for $i \in [k]$.
This results in at most $n^{2k}$ branches.

Then, we modify the revenue function analogously as in Case A, so that it is compatible with the guessed sets.
An analogue of \cref{clm:disjointlists} applies here and the subinstances $(G[D],\rev)$ and $(C,\rev)$ for every component $C$ of $G - (R \cup D)$, can be solved independently.

\paragraph{Solving the problem on $D$ and $T$.}
We call the algorithm recursively for every such subinstance; let us explain why each of them is simpler than $(G,\rev)$. The arguments are analogous as in Case A.
The clique number of $G[D]$ is at most $\omega -1$, because any vertex from $R$ is complete to $D$.
In the $(\bull,\chair)$-free case, i.e., in the setting of \cref{thm:chair}, the same holds for each subinstance $(C,\rev)$, as ensured by \cref{lem:structureCh}.
In the $(\bull,\E)$-free case, i.e., in the setting of \cref{thm:E}, use the third statement of \cref{lem:structureE}.
As each component of $G - N[P]$ has at most $n/2$ vertices, the same holds for each component $C$ of $G - (R \cup D)$.

The optimum solutions found by each recursive calls, together with the optimum solution for $G[R],\rev)$, can be freely combined together to obtain a solution for $(G,\rev)$.
Similarly as in Case A, in the branch corresponding to the structure of $(X,c)$, the algorithm returns an optimum solution.

\subsection*{Complexity analysis.}

The complexity analysis is split into two cases: if $G$ is $(\bull,\chair)$-free (as in \cref{thm:chair}) or if $G$ is $(\bull,\E)$-free (as in \cref{thm:E}).

\paragraph{Complexity for $(\bull,\chair)$-free graphs.}
We aim to show that the complexity $F(n,\omega)$ of the algorithm is bounded by $n^{\tau \cdot k \cdot \omega}$, where $\tau$ is some absolute (but large) constant.

In case $A$, we have at most $n^{13k}$ directions, and in each of them we call the algorithm recursively on at most $n$ instances, each of clique number at most $\omega-1$. This yields the following recursion.
\[
F(n,\omega) \leq n^{13k} \cdot n \cdot F(n,\omega-1) \leq n^{13k+1} \cdot n^{\tau \cdot k \cdot (\omega-1)} \leq n^{\tau \cdot k \cdot \omega}
\]
where in the last inequality we use that $\tau$ is large.

In case $B$, we consider $2^k$ branches to choose $A$.
For each branch, we apply \cref{lem:fatpath} in order to solve the subinstance $(G[R],\rev)$ in time $\Oh(6^k \cdot n \cdot F(n,\omega-1))$.
This gives us
\begin{align*}
F(n,\omega) \leq & \ 2^k \left(\Oh(6^k \cdot n \cdot F(n,\omega-1)) +  n^{2k} \cdot n \cdot F(n,\omega-1) \right) \\ \leq & \ n^{3k+1} \cdot n^{\tau \cdot k \cdot (\omega-1)} \leq n^{\tau \cdot k \cdot \omega}.
\end{align*}
This completes the proof of \cref{thm:chair}.

\paragraph{Complexity for $(\bull,\E)$-free graphs.}

Now we want to show that the complexity $F(n,\omega)$ of the algorithm is bounded by $n^{\tau \cdot k \cdot \omega \cdot \log n}$, where $\tau$ is some absolute (but large) constant.
The complexity analysis is analogous as for $(\bull,\chair)$-free graphs, with the difference that now in some recursive calls the clique number does not drop. However, in these calls the number of vertices decreases by half.
Thus, by the inductive assumption, the complexity of each recursive call is upper-bounded by 
\begin{align*}
\max \left ( F(n/2, \omega),  F(n,\omega-1) \right) \leq & \ \max \left ( n^{\tau k \cdot (\omega-1) \cdot \log n}, n^{\tau k \cdot \omega \cdot (\log n -1) }  \right) \\ \leq & \ n^{\tau k \cdot \omega \cdot \log n - \tau \cdot k}.
\end{align*}

Thus, in Case A we obtain:
\[
F(n,\omega) \leq n^{13k+1} \cdot  n^{\tau \cdot k \cdot \omega \cdot \log n - \tau \cdot  k} \leq n^{\tau \cdot k \cdot \omega \cdot \log n},
\]
and in Case B we obtain:
\[
F(n,\omega)  \leq n^{3k+1} \cdot n^{\tau \cdot k \log n - \tau k} \leq n^{\tau \cdot k \cdot \omega \cdot \log n}.
\]
This completes the proof of \cref{thm:E}.

\section{Algorithmic corollaries}
Recall that \textsc{List $k$-Coloring} can be seen as a special case of \textsc{Max Partial $k$-Coloring}. However, when solving \textsc{List $k$-Coloring}, we can safely reject any graph with clique number larger than $k$.
Thus, we may assume that $\omega \leq k$. This observation, combined with \cref{thm:chair,thm:E}, immediately yields \cref{cor:chair,cor:E} which we restate here.

\corchair*
\corE*

In order to prove \cref{cor:subexp}, we use the win-win approach of Chudnovsky et al.~\cite{DBLP:journals/siamdm/ChudnovskyKPRS21}.

\corsubexp*
\begin{proof}
    Let $(G,\rev)$ the the instance with $n$ vertices.
    We exhaustively check if $G$ contains a clique with $\sqrt{n/\log n}$ vertices;
    this can be done in time $n^{\Oh(\sqrt{n/\log n})} = 2^{\Oh(\sqrt{n \log n})}$.

    First, suppose that such a clique exists, call it $K$.
    Note that at most $k$ vertices from $K$ may be chosen to any solution.
    We branch into all possible ways of choosing these vertices and assigning their colors. Note that this gives at most $\sqrt{n / \log n}^k$ branches.

    For each branch, we modify the revenue function by forbidding the color $i$ to every neighbor of a vertex that was chosen to be colored $i$.
    Next, we remove $K$ from the graph and proceed recursively.

    The complexity in this case is given by the recursion 
    \[
        F(n) \leq \sqrt{n / \log n}^k F(n - \sqrt{(n/ \log n)}) = n^{\Oh(nk \sqrt{ \log n} / \sqrt{n})} = 2^{\Oh(k \cdot \sqrt{n} \log^{3/2} n)}.
    \]

    In the other case, we call the algorithm from \cref{thm:E} with $\omega = \sqrt{n/\log n}$. The complexity in this case is thus
    \[
    F(n) \leq n^{\Oh(k \cdot \sqrt{n/\log n} \log n)}= 2^{\Oh(k \cdot \sqrt{n} \log^{3/2} n)}.
    \]
    This completes the proof.    
\end{proof}

\section{Conclusion}

As already discussed in the introduction, the problem considered in this paper, i.e., \textsc{Max Partial $k$-Coloring}, was already considered in the setting of restricted graph classes~\cite{DBLP:journals/siamdm/ChudnovskyKPRS21,DBLP:journals/corr/abs-2410-21569,henderson2024maximumkcolourableinducedsubgraphs}.
Actually, the problem appearing in (some of ) the prior work is even more general.
For graphs $G$ and $H$, an \emph{$H$-coloring} of $G$ is a function $c : V(G) \to V(H)$ such that $c(u)c(v) \in E(H)$ for every $uv \in E(G)$. Note that if $H$ is the complete graph on $k$ vertices, then $H$-colorings coincide with proper $k$-colorings. Thus $H$-colorings can be seen as a far-reaching generalization of $k$-colorings.
The analogue of \textsc{Max Partial $k$-Coloring}, but for $H$-colorings, if defined as follows.
Here $H$ is a fixed simple graph.

\problemTask{\textsc{Max Partial $H$-Coloring}}
{a graph $G$, a revenue function $\rev : V(G) \times V(H) \to \mathbb{Q}_{\geq 0}$}
{a set $X$ and an $H$-coloring $c$ of $G[X]$, such that \[\sum_{v \in X} \rev(v,c(v))\] is maximum possible }

It is straightforward to verify that the proof of \cref{thm:chair,thm:E} and subsequent corollaries translates directly to \textsc{Max Partial $H$-Coloring}. In particular, we can obtain the following two stronger statements.
(We decided to keep the paper focused on (partial) $k$-colorings for simplicity of exposition, as the generalization to $H$-colorings does not bring any new insight.)

\begin{theorem}\label{thm:Hchair}
    For every simple graph $H$, \textsc{Max Partial $H$-Coloring} on $(\bull,\chair)$-free instances with $n$ vertices and clique number $\omega$ can be solved in time $n^{\Oh(|V(H)| \omega)}$.
\end{theorem}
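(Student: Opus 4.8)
The plan is to reuse \emph{verbatim} the algorithm behind \cref{thm:chair}, together with its induction on $n+\omega$, replacing the palette $[k]$ by $V(H)$ and the count $k$ by $|V(H)|$; I would only point out the places where the $H$-coloring constraint forces a change in the bookkeeping, and check that none of them affects the running time beyond the substitution $k\mapsto|V(H)|$.

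First I would observe that every structural ingredient survives unchanged. Since $H$ is simple (it has no loops), in any $H$-coloring $c$ of $G[X]$ each color class $c^{-1}(h)$ is an independent set: adjacent $u,v$ must satisfy $c(u)c(v)\in E(H)$ and hence $c(u)\neq c(v)$. Consequently \cref{lem:minimal} applies verbatim to every color $h\in V(H)$, since its proof uses only that $c^{-1}(h)$ is independent; thus every minimal ``separating'' set $S^h$ still has size at most $2$. Likewise \cref{lem:structureE,lem:structureCh} are statements about the graph alone and are untouched. The only genuinely new point is the combination step, where ``disjoint lists'' must be upgraded to ``cross-compatible lists.''

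The key adaptation is the forbidding convention. Whenever a vertex $x$ is fixed (or forced) to receive color $h$, instead of forbidding only $h$ to its neighbors I would forbid to every neighbor of $x$ all colors $h'$ with $hh'\notin E(H)$; note that for $H=K_k$ this is exactly forbidding $h$, so the new rule specializes to the old one. With this convention the analogue of \cref{clm:disjointlists} becomes: for every edge $uv$ with $u\in A_j$, $v\in A_{j'}$ and $j<j'$, and every pair of surviving colors $h_u$ of $u$ and $h_v$ of $v$, we have $h_u h_v\in E(H)$. To see this, fix such $h_u,h_v$ and consider $S_j^{h_u}$. If $v$ had no neighbor in $S_j^{h_u}$, then by minimality $v$ would have no neighbor in $A_j\cap X\cap c^{-1}(h_u)$, and the algorithm would have forbidden $h_u$ on all of $N(v)\cap A_j\ni u$, contradicting that $h_u$ survives at $u$; hence $v$ is adjacent to some vertex of $S_j^{h_u}$, which is forced to color $h_u$, so the forbidding convention guarantees that every surviving color of $v$ --- in particular $h_v$ --- is $H$-adjacent to $h_u$. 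This ``biclique'' property $L(u)\times L(v)\subseteq E(H)$ is precisely what is needed so that independently computed partial $H$-colorings of the parts $G[A_j]$ and of the components of $G-N[P]$ (and, in Case~B, of $G[D]$ and the components of $G-(R\cup D)$) combine into a valid partial $H$-coloring of $G$.

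The last modification is in \cref{lem:fatpath}. As consecutive sets $V_i,V_{i+1}$ of the fat path or cycle are complete to each other, the sets of colors $B_i$ used on $V_i$ and $B_{i+1}$ used on $V_{i+1}$ must satisfy $B_i\times B_{i+1}\subseteq E(H)$; I would therefore run the same dynamic program over states $B\subseteq V(H)$, defining $\textsf{OPT}[i,B]$ as the best $H$-coloring of $G[V_i]$ using only colors of $B$ (obtained from the recursive solver by forbidding colors outside $B$) and allowing a transition from $B'$ to $B$ only when $B'\times B\subseteq E(H)$. This keeps the fat-path routine at $\Oh(c^{|V(H)|}\cdot\lambda\cdot r)$ for an absolute constant $c$. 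Feeding all of this into the induction of \cref{thm:chair} --- Case~A producing $n^{\Oh(|V(H)|)}$ branches from guessing $c|_{X\cap P}$ and the sets $S_j^h$, each recursive part having clique number at most $\omega-1$ (for the components by \cref{lem:structureCh}) --- gives the recursion $F(n,\omega)\le n^{\Oh(|V(H)|)}\cdot n\cdot F(n,\omega-1)$ and hence $F(n,\omega)\le n^{\Oh(|V(H)|\omega)}$. I expect the main obstacle to be exactly the cross-compatibility claim above: unlike the proper-coloring case, disjointness of lists no longer suffices, and one must verify that the (suitably generalized) forbidding steps still force the stronger biclique condition on every cut edge.
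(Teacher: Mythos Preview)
Your proposal is correct and matches the paper's approach: the paper merely asserts that the proof of \cref{thm:chair} ``translates directly'' to \textsc{Max Partial $H$-Coloring}, and you have carried out exactly this translation, correctly identifying the two substantive changes --- upgrading the forbidding convention so that fixing a vertex to color $h$ eliminates all non-$H$-adjacent colors at its neighbors, and replacing ``disjoint lists'' in \cref{clm:disjointlists} by the biclique condition $L(u)\times L(v)\subseteq E(H)$ (with the analogous compatibility constraint in the fat-path DP and in the $R$--$D$ split of Case~B). Your verification that \cref{lem:minimal} still applies because simple $H$ forces independent color classes is precisely the reason the paper restricts to loopless $H$.
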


\begin{theorem}\label{thm:HE}
    For every simple graph $H$, \textsc{Max Partial $H$-Coloring} on $(\bull,\E)$-free instances with $n$ vertices and clique number $\omega$ can be solved in time $n^{\Oh(|V(H)| \omega  \log n)}$.
\end{theorem}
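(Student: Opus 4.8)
The plan is to follow the proof of \cref{thm:E} essentially verbatim, replacing the color set $[k]$ by the vertex set $V(H)$ and adjusting the two coloring-specific ingredients. First observe that all structural results used in that proof -- \cref{lem:structureE,lem:structureCh} and the Gy\'arf\'as path argument -- are statements about the graph $G$ alone and make no reference to the coloring model, so they carry over without any change. The whole algorithmic skeleton (the reduction to a Gy\'arf\'as path, the split into Case A and Case B, the recursion on the parts $A_j$ and $G[D]$, the fat structure $R$, and the components of $G-N[P]$) is likewise untouched; only the quantities indexed by colors change, and the number of colors $k$ is replaced everywhere by $|V(H)|$.

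The first ingredient to revisit is \cref{lem:minimal}, whose proof crucially uses that the three vertices $u_1,u_2,u_3$ sharing a color form an independent set. This is exactly where the hypothesis that $H$ is \emph{simple} enters: since $H$ has no loops, any two $G$-adjacent vertices must receive $H$-adjacent, hence distinct, colors, so every color class is independent just as in proper coloring. Consequently \cref{lem:minimal} holds verbatim for a fixed color $a\in V(H)$ in place of $i\in[k]$, and the bound $|S^a|\le 2$ is preserved. The sets $S_j^a$ and $D^a$ are then defined and guessed exactly as before, over $a\in V(H)$, costing $n^{\Oh(|V(H)|)}$ branches.

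The second, and genuinely substantive, ingredient is the list-separation step and \cref{clm:disjointlists}. Here ``disjointness'' of color lists must be replaced by ``$H$-compatibility.'' I would modify $\rev$ so that, for a guessed vertex $w\in S_j^a$, every color outside $N_H(a)$ (the neighborhood of $a$ in $H$) is forbidden on each neighbor of $w$, and so that for every $v\in A_{>j}$ anticomplete to $S_j^a$ the color $a$ is forbidden on $N(v)\cap A_j$. With these modifications the analogue of \cref{clm:disjointlists} states that for a cross edge $uv$ with $u\in A_j$, $v\in A_{j'}$, $j<j'$, every color with positive revenue at $u$ is $H$-adjacent to every color with positive revenue at $v$. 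The proof is the same two-case split: fixing $a$ with $\rev(u,a)>0$ and $b$ with $\rev(v,b)>0$, if $v$ is adjacent to some $w\in S_j^a$ then the surviving color $b$ at $v$ lies in $N_H(a)$, so $ab\in E(H)$; and if $v$ is anticomplete to $S_j^a$ then color $a$ is forbidden on $u\in N(v)\cap A_j$, contradicting $\rev(u,a)>0$. Because $H$-compatibility is a complete join in $H$, any solutions found independently on the parts can be glued along the cross edges into a valid $H$-coloring, exactly as disjointness was used before; the same reasoning handles the separation of $D$ from $T$ in Case B.

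Finally I would adapt \cref{lem:fatpath}: the tables $\textsf{OPT}[i,B]$ and $\textsf{PRE}[i,B]$ are now indexed by subsets $B\subseteq V(H)$, and the transition between consecutive fat-path sets $V_{i-1},V_i$ must range over pairs $(B',B)$ with $B'\times B\subseteq E(H)$ (the complete-join condition forced by $V_{i-1}$ being complete to $V_i$) rather than over disjoint pairs. Since this condition is monotone under taking subsets, the ``uses only colors in $B$'' semantics still works, and the running time remains $\Oh(c^{|V(H)|}\cdot\lambda\cdot r)$ for an absolute constant $c$. With these three local substitutions the complexity recursion is identical in form to that of \cref{thm:E}, with every occurrence of $k$ replaced by $|V(H)|$, yielding $F(n,\omega)\le n^{\tau\cdot|V(H)|\cdot\omega\cdot\log n}$ and hence the claimed bound $n^{\Oh(|V(H)|\omega\log n)}$. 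The one place demanding genuine care is the $H$-compatibility claim, since unlike the proper-coloring case the admissible cross-edge color pairs are no longer simply ``anything distinct'' but must form a complete bipartite subgraph of $H$; checking that the forbidding operations enforce precisely this join is the crux of the translation.
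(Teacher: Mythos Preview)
Your proposal is correct and matches the paper's approach exactly: the paper itself does not give a separate proof of \cref{thm:HE} but merely asserts that ``the proof of \cref{thm:chair,thm:E} \ldots translates directly to \textsc{Max Partial $H$-Coloring},'' and you have carried out precisely that verification. You correctly identify the three places requiring care---\cref{lem:minimal} (where simplicity of $H$ ensures color classes are independent), the $H$-compatibility analogue of \cref{clm:disjointlists}, and the fat-path dynamic program---and your treatment of each is sound; the only step you leave implicit is the Case~B guess of $A\subseteq V(H)$, where forbidding colors in $A$ to $D$ becomes forbidding colors outside $\bigcap_{a\in A}N_H(a)$, but this is the same complete-join principle you already articulated.
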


Let us emphasize that in \cref{thm:Hchair,thm:HE} we assume that $H$ is a simple graph, so in particular its vertices have no loops. However, note that in the context of \textsc{Max Partial $H$-Coloring}, it makes sense to consider graphs $H$ with loops on vertices: even though mapping everything to a vertex with a loop is a valid $H$-coloring, it might be far from optimal due to the values of the revenue function.
However, there is no hope to strengthen \cref{thm:Hchair,thm:HE} to all graphs $H$ with possible loops.
Indeed, Chudnovsky et al.~\cite{DBLP:journals/siamdm/ChudnovskyKPRS21} showed an example of an 12-vertex graph $H_0$ with loops on all vertices, such that \textsc{Max Partial $H_0$-Coloring} (and even more restrictive \textsc{List $H_0$-Coloring}, i.e., the analogue of \textsc{List $k$-Coloring} for $H_0$-colorings) is \textsf{NP}-hard in complements of bipartite graphs. Observe that complements of bipartite graphs are in particular $(\bull,\chair)$-free and thus also $(\bull,\E)$-free. Furthermore, the hardness reduction also excludes any algorithm working in time $2^{o(n)}$, assuming the Exponential-Time Hypothesis (ETH).

Nevertheless, we believe that even the following stronger variant of \cref{conj} holds.

\begin{conjecture}[Strengthening of \cref{conj}]
    For every simple graph $H$, \textsc{Max Partial $H$-Coloring} on $(\bull,\E)$-free graphs can be solved in polynomial time.
\end{conjecture}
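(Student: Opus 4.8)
The plan is to follow the trajectory that turned the $n^{\Oh(k\omega)}$-time algorithm for $P_5$-free graphs into a polynomial one, while isolating the two distinct sources of super-polynomial cost in the present algorithm. Reading off the recursion behind \cref{thm:HE}, the bound $n^{\Oh(|V(H)|\,\omega\,\log n)}$ has two independent culprits. The factor $\omega$ comes from the chain of recursive calls along which the clique number decreases by one (the calls on $G[A_j]$ and on $G[D]$), so its depth is $\omega$. The factor $\log n$ comes \emph{only} from Case~B: there a component of $G-(R\cup D)$ may keep clique number $\omega$, and \cref{lem:structureE} then guarantees only that its vertex count halves. Note that the companion $(\bull,\chair)$-free bound of \cref{thm:Hchair} carries no $\log n$ precisely because \cref{lem:structureCh} forces every such component to be complete to some vertex of $D$, which lets the clique number drop. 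To settle the conjecture we must kill both factors with no relation assumed between $\omega$ and $|V(H)|$.

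First I would try to remove the $\log n$, i.e.\ to recover an $\E$-free analogue of the chair-free phenomenon. The goal is a strengthening of \cref{lem:structureE} asserting that each component $C$ of $G-(R\cup D)$ is separated from the rest in a clique-cutset-like fashion (recall that $D$ is already complete to $R$), so that the recursion can be routed through a clique-cutset / atom decomposition rather than through the blunt $n/2$ Gy\'arf\'as halving; across clique separators \textsc{Max Partial $H$-Coloring} composes cleanly, and the aim is to make the atoms compose to a polynomial rather than to a $\log n$-depth bound. The difficulty is structural: $\E$-freeness is strictly weaker than chair-freeness, and the clean ``complete to a vertex of $D$'' property of \cref{lem:structureCh} genuinely fails, which is exactly why only the halving survives at present.

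Second, and more fundamentally, I would attack the factor $\omega$ through the \emph{modular decomposition} of $G$, as hinted after \cref{lem:fatpath}: fat paths and fat cycles are among the prime nodes that the current proof must treat by hand, and the modules hanging off the prime skeleton are where large cliques live. At parallel (disjoint-union) nodes the problem splits trivially, and at series (join) nodes it reduces to a generalised assignment problem on the quotient --- for $H=K_k$ this is simply the task of distributing the $k$ colors among the joined modules, since a color used inside one module is forbidden in every other, so at most $k$ modules are ever active. The crux is then whether \emph{prime} $(\bull,\E)$-free graphs have clique number bounded by an absolute constant (or by a function of $|V(H)|$); if so, the existing clique-number recursion is only ever applied to prime pieces and contributes bounded, rather than $\omega$, depth, and its now-polynomial running time combines with the series/parallel dynamic program to give the claim. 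I expect this structural bound on prime pieces --- plausibly reachable via Chudnovsky's structure theory of bull-free graphs --- together with the removal of the Gy\'arf\'as $\log n$, to be the heart of the matter; both steps mirror exactly those that required genuinely new ideas in the $P_5$-free setting, and it is far from clear that those ideas survive the weakening to $(\bull,\E)$-free structure.
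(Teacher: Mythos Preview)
The statement you are addressing is a \emph{conjecture}: the paper does not prove it and offers no proof sketch. There is therefore nothing to compare your proposal against. What you have written is not a proof but a research plan, and you acknowledge as much in your final sentence.

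That said, your diagnosis of the two super-polynomial sources in the recursion behind \cref{thm:HE} is accurate: the $\omega$ factor arises from the depth of the clique-number recursion, and the $\log n$ arises solely from Case~B where the components of $G-(R\cup D)$ need not lose clique number and are only guaranteed to halve in size. Your reading that the chair-free case avoids the $\log n$ precisely because of \cref{lem:structureCh} is also correct. However, the two proposed fixes are, at present, hopes rather than arguments. For the $\log n$ factor you would need a structural lemma that simply does not exist in the paper, and you give no candidate statement, let alone a proof idea; the paper's authors presumably looked for exactly such a strengthening of \cref{lem:structureE} and did not find one, which is why the conjecture remains open. For the $\omega$ factor, the hoped-for bound on the clique number of prime $(\bull,\E)$-free graphs is itself an open structural question; you offer no evidence for it beyond an appeal to ``Chudnovsky's structure theory of bull-free graphs,'' and even if true it would still need to be combined with a dynamic program over the modular decomposition tree that handles \textsc{Max Partial $H$-Coloring} at series nodes for arbitrary simple $H$, which is not entirely routine.

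In short: this is a reasonable outline of where one might look, but it contains no proof and the paper contains none either. The conjecture stands open.
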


\bibliographystyle{abbrv}
\bibliography{main}
\end{document}